\documentclass[11pt,a4paper]{amsart}
\usepackage{amsfonts}
\usepackage{latexsym,amssymb,amsfonts,graphicx}
\usepackage{amsmath,dsfont}
\usepackage[linewidth=1pt]{mdframed}
\usepackage{mathrsfs}
\usepackage[normalem]{ulem}
\usepackage{epsfig}
\usepackage{tabularx}
\usepackage{verbatim}
\usepackage{enumitem}
\usepackage{framed}
\usepackage{algorithm2e}
\usepackage{tikz}
\usetikzlibrary{arrows.meta,positioning}
\usepackage[a4paper,left=25mm,right=25mm,top=20mm,bottom=20mm,headheight=15mm,headsep=10mm]{geometry}

\usepackage{url}
\usepackage{bm}
\usepackage{color}
\usepackage{natbib}
\setcitestyle{authoryear,round}

\usepackage[unicode=true,colorlinks,citecolor=blue,linkcolor=blue,urlcolor=blue]{hyperref}

\providecommand{\U}[1]{\protect\rule{.1in}{.1in}}

\newtheorem{thm}{Theorem}[section]

\newtheorem{lem}[thm]{Lemma}

\usepackage[bf,SL,BF]{subfigure}

\numberwithin{equation}{section}
\allowdisplaybreaks[4]

\newcommand{\revadd}[1]{\textcolor{black}{#1}}

\definecolor{linkcolor}{rgb}{0,0,1}
\hypersetup{colorlinks=true,citecolor=linkcolor,linkcolor=linkcolor,urlcolor=linkcolor}

\title[OPTIMAL CREDIT PORTFOLIO WITH DEFAULT CONTAGION]{OPTIMAL CREDIT PORTFOLIO AND CONSUMPTION\\ WITH REGIME SWITCHING AND DEFAULT CONTAGION}
\author[Fei Sun]{Fei Sun$^{1}$}
\author[Wenyuan Wang]{Wenyuan Wang$^{2,4}$}
\author[Kaixin Yan]{Kaixin Yan$^{3,4,*}$}
\thanks{$^{1}$School of Mathematics and Computational Science, Wuyi University, Jiangmen, 529020, China. $^{2}$School of Mathematics and Statistics, Fujian Normal University, Fuzhou, Fujian, 350117, China. $^{3}$School of Mathematics and Statistics, Xi'an Jiaotong University, Xi'an, Shaanxi, 710049, China. $^{4}$School of Mathematical Sciences, Xiamen University, Xiamen, Fujian, 361005, China. Fei Sun: \texttt{fsun.sci@outlook.com}; Wenyuan Wang: \texttt{wwywang@xmu.edu.cn}; Kaixin Yan: \texttt{kaixinyan@stu.xmu.edu.cn}. $^{*}$Corresponding author.}
\thanks{Fei Sun is supported by the National Natural Science Foundation of China (12401620), the Special Foundation in Key Fields for Universities of Guangdong Province (2023ZDZX4060), and the Education Science Planning Project of Guangdong Province (2024GXJK269).}
\thanks{Wenyuan Wang is supported by the National Natural Science Foundation of China (12571508, 12171405, 11661074), the Natural Science Foundation of Fujian Province (2024J01480), and the Program for New Century Excellent Talents in Fujian Province University (20720220044).}
\date{}

\begin{document}

\maketitle

\begin{abstract}
	We study optimal portfolio and consumption in a regime-switching multi-name
	credit market with default contagion. Defaults generate portfolio losses and
	alter the intensities of surviving securities. Under Cobb--Douglas utility,
	homogeneity reduces the HJB equation to a recursive ODE system indexed by the
	default states. Solving it backward from the all-default state, we establish
	existence and uniqueness of positive classical solutions, characterize the
	optimal feedback controls, and prove a verification theorem.
\end{abstract}


\section{Introduction}\label{sec:introduction}

	Since the seminal work of \citet{Merton69,Merton71}, the optimal
	portfolio-consumption problem has become a central framework for studying how an
	investor allocates wealth among risky investment and current consumption.
	The classical formulation assumes that the investment environment is stable over time.
	In financial markets, however, expected returns, volatilities, and interest rates may
	change substantially with macroeconomic conditions. Finite-state regime-switching
	models provide a tractable way to represent these persistent changes. Within this
	framework, \citet{ZhouYin03} study continuous-time mean--variance portfolio selection,
	while \citet{SotomayorCadenillas09}, \citet{GassiatGozziPham14}, and
	\citet{ShenSiu17} investigate portfolio-consumption decisions under, respectively,
	regime-dependent market coefficients, illiquidity, and partial information. These studies demonstrate
	that investment and consumption should respond jointly to changes in the economic
	regime. They do not, however, account for a multi-name credit portfolio whose set of
	surviving securities changes endogenously after defaults.

	Default risk introduces a second source of change. In a jump-to-default model, a
	security's price falls to zero at default (see, \cite{Linetsky06}). In a multi-name portfolio,
	a default can also alter the intensities and risk-return characteristics of the surviving
		names. Such contagion appears in interacting-intensity models, risk-sensitive management
		with cascading defaults, investment under multiple defaults, and portfolio choice with a
		defaultable security and regime switching; see, in particular,
		\citet{FreyBackhaus08}, \citet{BirgeBoCapponi18}, \citet{JiaoKharroubiPham13}, and
		\citet{CapponiFigueroaLopez14}. \citet{AitSahaliaHurd16} incorporate
	consumption under mutually exciting price jumps, but their jumps are not irreversible
	defaults and their model has neither regime switching nor an evolving default
	configuration. 

	More recently, \citet{Bo19} investigate an insurer's optimal credit investment and
	risk-control problem in a regime-switching market with default contagion. Motivated by
	these studies, we retain the regime-dependent credit-market structure and default-state
	decomposition, and study an investor's joint portfolio and intermediate-consumption
	decisions.

	The contributions of this paper are summarized as follows. First, we introduce
	intermediate consumption into a regime-switching multi-name credit model with default
	contagion. The investor therefore jointly determines credit investment and intermediate
	consumption, balancing current consumption against terminal wealth. We represent this
	trade-off by a Cobb--Douglas criterion, a preference specification also used by
	\citet{Zhao22} in a pension investment and benefit-adjustment problem. 
	In contrast
	to the terminal-wealth criterion of \citet{Bo19}, our objective combines utility from
	consumption over the investment horizon with utility from terminal wealth. Sequential
	defaults may consequently affect not only the investor's credit positions and remaining
	wealth, but also the value of future consumption opportunities. 

	Second, consumption changes the recursive HJB system.
	In particular, whereas the all-default equation in \citet{Bo19} is linear, ours remains
	nonlinear because consumption continues after all risky assets have defaulted. To solve
	this base system, we truncate the power
	nonlinearity away from zero, solve the resulting globally Lipschitz system, and derive a
	strictly positive lower bound independent of the truncation level. Choosing the truncation
	below this bound shows that it is inactive and yields the unique positive classical
	solution that initiates the recursion over the remaining default states.


	The remainder of the paper is organized as follows. Section~\ref{sec:market-model}
	presents the market model, and Section~\ref{sec:cobb-douglas} formulates and solves the
	portfolio-consumption problem.

\section{The Market Model}\label{sec:market-model}

Fix a finite investment horizon $T\in(0,\infty)$. We consider a credit market on $[0,T]$ consisting of $n\geq2$ defaultable risky assets and one riskless bond. For any positive integer $n$, $e_{n}$ denotes the $n$-dimensional column vector whose entries are all equal to one. For a vector $a$, $\operatorname{diag}(a)$ denotes the diagonal matrix with diagonal vector $a$.

\subsection{Default contagion model}\label{subsec:default-contagion}
Assume that the market is established under a filtered probability space $(\Omega,\mathcal G,\mathbb G,\mathbb P)$ with the filtration $\mathbb G:=\mathbb F\vee\mathbb H_1\vee\mathbb H_2$ is augmented by all $\mathbb P$-null sets so as to satisfy the usual conditions. The filtration $\mathbb F=(\mathcal F_t)_{t\in[0,T]}$ is referred to as the reference filtration which is the natural filtration generated by independent multi-dimensional Brownian motions denoted by $W=(W_t^j;j=1,\ldots,d)^\top_{t\in[0,T]}$ and $\widehat W=(\widehat W_t^j j=1,\ldots,\widehat d)^\top_{t\in[0,T]}$ jointly, where the positive integers $d$ and $\widehat d$ denote the dimensions of the Brownian motions, and $\top$ is the transpose operator.

The filtration $\mathbb H_1=(\mathcal H_t^1)_{t\in[0,T]}$ contains all information about default events until the target horizon $T$. Let $H=(H_t^1,\ldots,H_t^n)^\top_{t\in[0,T]}$ be the default indicator process, taking values in the finite default-state space $\mathcal S_H:=\{0,1\}^n$. The interpretation is that $H_t^j=0$ means that asset $j$ is alive at time $t$, while $H_t^j=1$ means that it has defaulted. The default time of asset $j$ is therefore $\tau^j:=\inf\{t\geq0:H_t^j=1\}\wedge T$, with the convention $\inf\emptyset=\infty$. For $h=(h^1,\ldots,h^n)\in\mathcal S_H$, we write
$\widetilde h^j:=(h^1,\ldots,h^{j-1},1-h^j,h^{j+1},\ldots,h^n),$
which is the neighboring default state obtained by switching the $j$th component of $h$.

The filtration $\mathbb H_2=(\mathcal H_t^2)_{t\in[0,T]}$ is generated by a continuous-time Markov chain $I=(I_t)_{t\in[0,T]}$. Its finite state space is denoted by $\mathcal S_I:=\{1,\ldots,m\}$, and its generator is $Q=(q_{ij})_{i,j=1}^m$. The process $I$ is independent of $(W,\widehat W)$ and represents the observable regime of the credit market.
For two distinct regimes $i,j\in\mathcal S_I$, define the transition counting process
$K_t^{i,j}:=\sum_{0<s\leq t}\mathbf 1_{\{I_{s-}=i,I_s=j\}},$
and its compensated martingale
$\widetilde K_t^{i,j}:=K_t^{i,j}-\int_0^t q_{ij}\mathbf 1_{\{I_s=i\}}\,ds.$

Using the random ingredients prepared above, we formulate the default contagion dynamic model which is described as the joint process $(I,H)=(I_t,H_t)_{t\in[0,T]}$ comprised of the regime-switching process and the default indicator process is a Markov process living in $\mathcal S_I\times\mathcal S_H$. Moreover, the default indicator process transits from a state $H_t=(H_t^1,\ldots,H_t^{j-1},H_t^j,H_t^{j+1},\ldots,H_t^n)$ in which risky asset $j$ is alive $(H_t^j=0)$ to the neighboring state $\widetilde H_t^{\,j}=(H_t^1,\ldots,H_t^{j-1},1-H_t^j,H_t^{j+1},\ldots,H_t^n)$ in which risky asset $j$ has defaulted at a random rate $\mathbf 1_{\{H_t^j=0\}}\lambda_j(I_t,H_t)$. Here, we assume that the default intensity function $\lambda_j(i,h):\mathcal S_I\times\mathcal S_H\to(0,\infty)$ is bounded.

\subsection{Price dynamics of risky assets}\label{subsec:risky-asset-prices}
Building upon the default contagion model specified as in Section~\ref{subsec:default-contagion}, the price dynamics of risky asset $j$ is given by the following Black-Scholes model with default contagion under the framework of jump-to-default \citep{Linetsky06}, for $j=1,\ldots,n$,
\begin{equation}\label{asset.price}
\left\{
\begin{array}{l}
S_t^j=(1-H_t^j)\widehat S_t^j,\\[1mm]
\displaystyle\frac{d\widehat S_t^j}{\widehat S_{t-}^j}
=\bigl(\mu_j(I_t)+\lambda_j(I_t,H_t)\bigr)dt
+\sum_{k=1}^d\sigma_{jk}(I_t)dW_t^k
+\sum_{k=1}^{\widehat d}\widehat\sigma_{jk}(I_t)d\widehat W_t^k,
\end{array}
\right.
\end{equation}
Here $\mu(i)=(\mu_1(i),\ldots,\mu_n(i))^\top$ is the regime-dependent expected return vector. The matrices $\sigma(i)=(\sigma_{jk}(i))_{1\leq j\leq n,1\leq k\leq d}$ and $\widehat\sigma(i)=(\widehat\sigma_{jk}(i))_{1\leq j\leq n,1\leq k\leq \widehat d}$ are the regime-dependent diffusion coefficient matrices. We write
$\Sigma(i):=(\sigma(i),\widehat\sigma(i)),\, i\in\mathcal S_I,$
for the combined $n\times(d+\widehat d)$ volatility matrix, and assume throughout that $\Sigma(i)\Sigma(i)^\top$ is positive definite for every regime $i\in\mathcal S_I$.
We refer to $\widehat S^j=(\widehat S_t^j)_{t\in[0,T]}$ as the pre-default price for asset $j$. The dynamics of $\widehat S^j$ in \eqref{asset.price} indicates that the investor holding asset $j$ is compensated for the incurred default risk at the premium rate $\lambda_j(I_t,H_t)$. This is the additional rate of return that he is receiving, over the expected rate $\mu_j(I_t)$, to bear the default risk of the $j$th asset in the portfolio. The price representation $S_t^j$ for asset $j$ at time $t$ in \eqref{asset.price} implies that the price of the $j$th asset is given by the pre-default price $\widehat S_t^j$ up to $\tau^j$-, and jumps to zero at its default time $\tau^j$, where it remains forever afterwards. It can be seen from \eqref{asset.price} that each price process is correlated with the stochastic factors (through the volatility matrix $\sigma(i)$) and has a source of risk independent of the factors (captured by the volatility matrix $\widehat\sigma(i)$). The It\^{o}'s rule implies from \eqref{asset.price} that the price dynamics of asset $j$ follows:
\begin{equation}\label{asset.return}
\frac{dS_t^j}{S_{t-}^j}
=\mu_j(I_t)dt
+\sum_{k=1}^d\sigma_{jk}(I_t)dW_t^k
+\sum_{k=1}^{\widehat d}\widehat\sigma_{jk}(I_t)d\widehat W_t^k
-dM_t^j.
\end{equation}
Here, the process $M^j=(M_t^j)_{t\in[0,T]}$ is a pure jump $\mathbb G$-martingale which is defined by
\begin{equation}\label{default.martingale}
M_t^j:=H_t^j-\int_0^{t\wedge\tau^j}\lambda_j(I_s,H_s)ds,
\quad \forall t\in[0,T].
\end{equation}

\subsection{Portfolio and consumption}\label{subsec:portfolio-consumption}
This section describes the portfolio and consumption strategy of one investor who will invest her wealth in the above $n$ risky assets and a riskless bond with interest rate $r(i)\geq0$ for $i\in\mathcal S_I$. Hence, the price dynamics of the riskless bond is $dB_t=r(I_t)B_tdt$ with $B_0=1$. For $j=1,\ldots,n$, let $\pi^j=(\pi_t^j)_{t\in[0,T]}$ be the $\mathbb G$-predictable fraction strategy for the $j$-th asset. It is assumed that the investor will not invest in the risky asset once it has defaulted. This yields that $1-\pi_t^\top e_n$ is the fraction strategy for the riskless bond at time $t$, where $\pi_t:=(\pi_t^j;j=1,\ldots,n)^\top$. In addition, we note that, since the price of the $j$-th asset jumps to zero when it defaults, the fraction of wealth held by the investor in this stock is zero after it defaults. In particular, it holds that $\pi_t^j=(1-H_{t-}^j)\pi_t^j$ for $j=1,\ldots,n$.
The jump-to-default constraint also requires $\pi_t^j<1$ before default, so we set the one-dimensional portfolio domain $\mathcal U:=(-\infty,1)$.

Let $\mathbb R_+:=(0,\infty)$ and $\overline{\mathbb R}_+:=[0,\infty)$. The consumption control $c=(c_t)_{t\in[0,T]}$ is the consumption rate per unit of wealth and takes values in $\mathbb R_+$. For an admissible pair $(\pi,c)$, whose precise definition is given in Section~\ref{sec:cobb-douglas}, $X_t^{\pi,c}$ denotes the investor's wealth at time $t$. Combining the self-financing condition with \eqref{asset.return} yields
\begin{equation}\label{X.sde}
\frac{dX_t^{\pi,c}}{X_{t-}^{\pi,c}}
=\bigl(r(I_t)+\pi_t^\top(\mu(I_t)-r(I_t)e_n)-c_t\bigr)dt
+\pi_t^\top\sigma(I_t)dW_t
+\pi_t^\top\widehat\sigma(I_t)d\widehat W_t
-\pi_t^\top dM_t,
\end{equation}
where $M_t=(M_t^j;j=1,\ldots,n)^\top$ for $t\in[0,T]$.
For later use, we also introduce the excess compensated return vector
\[
\widetilde\mu(i,h):=\mu(i)-r(i)e_n+\lambda(i,h),
\qquad (i,h)\in\mathcal S_I\times\mathcal S_H,
\]
where $\lambda(i,h):=(\lambda_1(i,h),\ldots,\lambda_n(i,h))^\top$. This notation appears after rewriting the default martingale term in drift-plus-martingale form.
\section{Optimal Portfolio and Consumption under Cobb-Douglas Utility}\label{sec:cobb-douglas}

In this section, we study the optimal portfolio and consumption problem in the regime-switching credit market with default contagion introduced above. The investor derives utility from intermediate consumption and terminal wealth, and the Cobb--Douglas specification offers a concise economic description of the trade-off between present consumption and future financial security.

Define the Cobb--Douglas utility
\begin{equation}\label{U(x)}
U_\eta(c,x):=\frac{(c^\eta x^{1-\eta})^{1-\gamma}}{1-\gamma},
\qquad \forall(c,x)\in\mathbb{R}_+^2,
\end{equation}
where $\gamma>1$ is the agent's risk aversion coefficient and $\eta\in(0,1]$ is interpreted as the weight parameter which reflects the agent's attitude to the consumption level and current wealth. Note that $u(x):=U_0(1,x)=\frac{x^{1-\gamma}}{1-\gamma}$ for $x\in\mathbb{R}_+$ becomes the classical power utility function which will have been widely used in the literature on the dynamical portfolio optimization.
Since $\gamma>1$, both $U_\eta$ and $u$ are negative-valued, increasing, and concave in the relevant wealth variable.

We first define the admissible set. A portfolio-consumption strategy $(\pi,c)$ belongs to $\mathcal A$ if it is a $\mathbb G$-predictable Markov feedback control of the following form: for $j=1,\ldots,n$,
\begin{equation}\label{markov.controls}
\pi_t^j:=\pi^j(X_{t-}^{\pi,c},I_{t-},H_{t-}),\qquad
c_t:=c(X_{t-}^{\pi,c},I_{t-},H_{t-}),\qquad \forall t\in[0,T]
\end{equation}
where $\pi^j:\mathbb R_+\times\mathcal S_I\times\mathcal S_H\to\mathcal U$ and $c:\mathbb R_+\times\mathcal S_I\times\mathcal S_H\to\mathbb R_+$ are measurable functions. In addition, for every initial state $(t,x,i,h)\in[0,T]\times\mathbb R_+\times\mathcal S_I\times\mathcal S_H$, the wealth equation \eqref{X.sde} starting from $X_t^{\pi,c}=x$ admits a unique positive strong solution on $[t,T]$, and the family
\[
\bigl\{(X_s^{\pi,c})^{1-\gamma}:s\in[t,T]\bigr\}
\]
is uniformly integrable. These are the admissibility conditions imposed on $\mathcal A$, and they ensure that the following criterion is well defined. For $(\pi,c)\in\mathcal A$, its performance from the initial state $(t,x,i,h)$ is
\begin{equation}\label{objective.cobb}
J(\pi,c;t,x,i,h):=\mathbb{E}_{t,x,i,h}\left[\int_t^T U_\eta(c_sX_s^{\pi,c},X_s^{\pi,c})ds+u(X_T^{\pi,c})\right]
\end{equation}
with the conditional expectation $\mathbb{E}_{t,x,i,h}[\cdot]:=\mathbb{E}[\cdot\mid X_t=x,I_t=i,H_t=h]$. The investor seeks a feedback strategy $(\pi^*,c^*)\in\mathcal A$ satisfying
\[
J(\pi^*,c^*;t,x,i,h)=\sup_{(\pi,c)\in\mathcal A}J(\pi,c;t,x,i,h).
\]

\subsection{Dynamic Program Equations}\label{subsec:cobb-dpe}
In order to solve the stochastic control problem, we define the following value function given by, for $(t,x,i,h)\in[0,T]\times\mathbb{R}_+\times\mathcal S_I\times\mathcal S_H$,
\begin{equation}\label{value.cobb}
V(t,x,i,h):=\sup_{(\pi,c)\in\mathcal A}J(\pi,c;t,x,i,h),
\end{equation}
where the objective functional $J(\pi,c;t,x,i,h)$ is defined by \eqref{objective.cobb}.

Using the dynamic program, the value function $V$ formally satisfies the following HJB equation:
\begin{align}\label{HJB.value.cobb}
0={}&\frac{\partial V(t,x,i,h)}{\partial t}+xr(i)\frac{\partial V(t,x,i,h)}{\partial x}
{\color{black}+\sum_{\ell\ne i}(V(t,x,\ell,h)-V(t,x,i,h))q_{i\ell}}\nonumber\\
&+\sup_{(\pi,c)\in\mathcal U^n\times\mathbb{R}_+}\Bigg\{
U_\eta(cx,x)+x\frac{\partial V(t,x,i,h)}{\partial x}
\big(\pi^\top(I-\operatorname{diag}(h)){\color{black}\widetilde\mu(i,h)}-c\big)\nonumber\\
&\qquad+\frac12x^2\frac{\partial^2V(t,x,i,h)}{\partial x^2}
\left|\Sigma(i)^\top(I-\operatorname{diag}(h))\pi\right|^2\nonumber\\
&\qquad+\sum_{j=1}^n\big(V(t,x(1-\pi_j),i,\widetilde h^j)-V(t,x,i,h)\big)
\lambda_j(i,h)(1-h^j)\Bigg\}
\end{align}
with terminal condition $V(T,x,i,h)=u(x)$ for all $(x,i,h)\in\mathbb{R}_+\times\mathcal S_I\times\mathcal S_H$.

The Cobb--Douglas criterion is homogeneous in wealth. This motivates the ansatz
\[
V(t,x,i,h)=\frac{x^{1-\gamma}}{1-\gamma}\varphi(t,i,h).
\]
Substituting this form into \eqref{HJB.value.cobb} and dividing by the common factor $x^{1-\gamma}/(1-\gamma)$ yields the recursive system
\begin{align}\label{HJB}
0=&\frac{\partial\varphi(t,i,h)}{\partial t}+(1-\gamma)r(i)\varphi(t,i,h)
{\color{black}+\sum_{\ell\in\mathcal S_I}\varphi(t,\ell,h)q_{i\ell}}\nonumber\\
&+\inf_{(\pi,c)\in\mathcal U^n\times\mathbb{R}_+}
H_\eta(\pi,c;i,h,\varphi(t,i,\widetilde h^j);j=0,1,\ldots,n),
\end{align}
with terminal condition $\varphi(T,i,h)=1$ for all $(i,h)\in\mathcal S_I\times\mathcal S_H$. Above,we set $\widetilde h^0:=h$ and specify the Hamiltonian as, for any mapping $f(h):\mathcal{S}_{H}\rightarrow \mathbb{R},$
\begin{align}\label{H}
{\color{black}H_\eta\big(\pi,c;i,h,f(\widetilde h^j);j=0,1,\ldots,n\big)}:={}&c^{\eta(1-\gamma)}
+\left\{(1-\gamma)\big(\pi^\top(I-\operatorname{diag}(h)){\color{black}\widetilde\mu(i,h)}-c\big)\right.\nonumber\\
&\left.-\frac{\gamma(1-\gamma)}2\left|\Sigma(i)^\top(I-\operatorname{diag}(h))\pi\right|^2\right\}f(h)\nonumber\\
&+\sum_{j=1}^n\big((1-\pi_j)^{1-\gamma}f(\widetilde h^j)-{f(h)}\big)
\lambda_j(i,h)(1-h^j).
\end{align}
Reversing the time by $\Bar{\varphi}(t,i,h):=\varphi(T-t,i,h)$. Then, Eq.~\eqref{HJB} becomes that
\begin{align}\label{HJB.2}
\frac{\partial\Bar{\varphi}(t,i,h)}{\partial t}
=&(1-\gamma)r(i)\Bar{\varphi}(t,i,h){\color{black}+\sum_{\ell\in\mathcal S_I}\Bar{\varphi}(t,\ell,h)q_{i\ell}}\nonumber\\
&+\inf_{(\pi,c)\in\mathcal U^n\times\mathbb{R}_+}
H_\eta(\pi,c;i,h,\Bar{\varphi}(t,i,\widetilde h^j);j=0,1,\ldots,n)
\end{align}
with initial condition $\Bar{\varphi}(0,i,h)=1$ for all $(i,h)\in\mathcal S_I\times\mathcal S_H$.
To explore the well-posedness of the original recursive HJB equation \eqref{HJB.value.cobb} in classical sense, it is enough to study the well-posedness of the initial value recursive problem \eqref{HJB.2} using the following illustration of the route:

\begin{center}
\begin{tikzpicture}[
  routebox/.style={
    draw,
    align=center,
    inner xsep=4pt,
    inner ysep=3pt,
    text width=0.27\textwidth,
    font=\small
  },
  routearrow/.style={-{Latex[length=2mm]},thick}
]
\node[routebox] (initial) {Solution of Initial Value\\
Problem \eqref{HJB.2} $\Bar{\varphi}(t,i,h)$};
\node[routebox,right=8mm of initial] (terminal) {Solution of Terminal\\
Value Problem \eqref{HJB}\\
$\varphi(t,i,h)=\Bar{\varphi}(T-t,i,h)$};
\node[routebox,right=8mm of terminal] (hjb) {Solution of HJB Equation
\eqref{HJB.value.cobb}\\
$V(t,x,i,h)=\dfrac{x^{1-\gamma}}{1-\gamma}\varphi(t,i,h)$};
\draw[routearrow] (initial) -- (terminal);
\draw[routearrow] (terminal) -- (hjb);
\end{tikzpicture}
\end{center}

\subsection{The Well-posedness of Recursive Equation \eqref{HJB.2}}\label{subsec:cobb-well-posedness}
This subsection proves the existence and uniqueness of positive classical solutions to the recursive system \eqref{HJB.2}. For vectors $x,y\in\mathbb R^m$, we write $x\gg y$ if $x_i>y_i$ for every $i\in\mathcal S_I$. This order notation will be used to state strict positivity of the solutions.

The system is recursive because the equation at a default state $h$ depends on the neighboring states $\widetilde h^j$ reached after an additional default. Therefore the natural proof runs backward through default states: first solve the state in which all assets have defaulted, then use that solution to solve states with one surviving asset, and continue until the all-alive state is reached. For indices $j_1,\ldots,j_k$ that are distinct elements of $\{1,\ldots,n\}$, let $0^{j_1,\ldots,j_k}\in\mathcal S_H$ denote the vector obtained from the zero vector by setting the coordinates $j_1,\ldots,j_k$ equal to one. Thus $0$ denotes the all-alive state when $k=0$, while $0^{j_1,\ldots,j_n}=e_n$ denotes the all-defaulted state.

In the sequel, we will use $(\pi^*,c^*)$ to represent the optimal (feedback) portfolio-consumption strategy. We distinguish two cases. The first and easier case is when the default state $h=e_n^{\top}$. Since all $n$ assets have defaulted, the decision maker does not allocate any wealth to them, i.e., $\pi^*_1=\cdots=\pi^*_n=0$. Thus, the portfolio and consumption space in this case is reduced to
$\mathcal A^{(n)}:=\{\pi^{(n)}:[0,T]\times\mathcal S_I\to\mathcal U^n;\ \pi\equiv0\}
\times\{c^{(n)}:[0,T]\times\mathcal S_I\to\mathbb R;\ c^{(n)}(t,i)\geq0\},$
and the Hamiltonian is reduced to $H_\eta^{(n)}(c;i,x):=c^{\eta(1-\gamma)}-x(1-\gamma)c.$
Since $\gamma>1$, it is not difficult to verify that $c\mapsto H_\eta^{(n)}$ is continuous and strictly convex on $\mathbb R_+$ satisfying $\lim_{c\rightarrow0+}H_\eta^{(n)}(c;i,x)=\lim_{c\rightarrow\infty}H_\eta^{(n)}(c;i,x)=+\infty$. Note that $H_\eta^{(n)}(1;i,x)=1-x(1-\gamma)<\infty$. Hence, there exists a unique optimum $c^*$ such that 
$\frac{\partial H_\eta^{(n)}(c^*;i,x)}{\partial c}=0.$
Thus, we derive that
\begin{eqnarray}\label{opti.c}
c^*(i,x)=\left(\frac{x}{\eta}\right)^{\frac{1}{\eta(1-\gamma)-1}}.
\end{eqnarray}
Plugging $h=e_n^\top$, $\pi^*\equiv0$ and $c^*$ given by \eqref{opti.c} into Eq.~\eqref{HJB.2}, letting $\Bar{\varphi}^{(n)}(t):=(\Bar{\varphi}(t,i,e_n^{\top});i\in\mathcal S_I)^{\top}$, we arrive at
\begin{eqnarray}\label{dyn.n}
\left\{
\begin{array}{ll}
    \frac{d}{dt} \Bar{\varphi}^{(n)}(t)&=\Lambda^{(n)}\Bar{\varphi}^{(n)}(t)+
    \Phi^{(n)}(\Bar{\varphi}^{(n)}(t)),
    \\
    \Bar{\varphi}^{(n)}(0)&=e_m.
\end{array}
\right.
\end{eqnarray}
where the matrix of coefficient is given by
\begin{eqnarray}\label{A^n}
\Lambda^{(n)}\hspace{-0.3cm}&:=&\hspace{-0.3cm}
{\rm diag}\Big[((1-\gamma)r(i);\,i\in\mathcal S_I)\Big]+Q,
\end{eqnarray}
and the mapping $\Phi^{(n)}(y)=(\Phi^{(n)}_i(y);i\in\mathcal S_I)^{\top}$ for $y\in\mathbb{R}^m$ is defined as, for $i\in\mathcal S_I$,
\begin{eqnarray}\label{G^n}
\Phi^{(n)}_i(y):=
\Big(\eta^{-\frac{\eta(1-\gamma)}{\eta(1-\gamma)-1}}-(1-\gamma)\eta^{-\frac{1}{\eta(1-\gamma)-1}}\Big)y_i^{\frac{\eta(1-\gamma)}{\eta(1-\gamma)-1}}.
\end{eqnarray}

The following lemma settles this terminal node of the default recursion. Its role is to provide the starting point for the subsequent backward induction.
\begin{lem}\label{lem4.1}\label{lem4.2}\label{comparison}
The dynamical system \eqref{dyn.n} has a unique positive solution on $[0,T]$.
\end{lem}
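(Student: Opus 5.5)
The plan is to reduce \eqref{dyn.n} to a globally Lipschitz system by truncation and then show that the truncation never binds.

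\textbf{Structure of the nonlinearity.} Put $a:=\eta(1-\gamma)<0$ and $p:=a/(a-1)$.
\begin{itemize}
\item Since $a<0$, we have $p\in(0,1)$.
\item Both terms in the constant $C:=\eta^{-p}-(1-\gamma)\eta^{-1/(a-1)}$ are positive, because $1-\gamma<0$. Hence $C>0$.
\end{itemize}
So \eqref{dyn.n} reads $y'=\Lambda^{(n)}y+C\,(y_i^p)_{i\in\mathcal S_I}$. The right-hand side is sublinear at infinity but fails to be Lipschitz at $0$. The matrix $\Lambda^{(n)}$ is Metzler, since its off-diagonal entries $q_{i\ell}\ge0$. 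Fix $K\ge\max_i\big((\gamma-1)r(i)-q_{ii}\big)$. Then $\Lambda^{(n)}+KI$ is entrywise nonnegative, so
\[
e^{\Lambda^{(n)}t}=e^{-Kt}e^{(\Lambda^{(n)}+KI)t}\ge e^{-Kt}I
\]
entrywise.

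\textbf{Truncation and a lower bound.} For $\varepsilon>0$, replace $y_i^p$ by $(y_i\vee\varepsilon)^p$. The resulting vector field is globally Lipschitz with constant at most $\|\Lambda^{(n)}\|+Cp\varepsilon^{p-1}$. By Picard--Lindel\"of it has a unique $C^1$ solution $y^\varepsilon$ on $[0,T]$.

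Variation of constants gives
\[
y^\varepsilon(t)=e^{\Lambda^{(n)}t}e_m+\int_0^te^{\Lambda^{(n)}(t-s)}C\big((y^\varepsilon_i(s)\vee\varepsilon)^p\big)_i\,ds .
\]
The integrand is nonnegative, since $e^{\Lambda^{(n)}(t-s)}$ is a nonnegative matrix and the truncated power is positive. Therefore
\[
y^\varepsilon(t)\ge e^{\Lambda^{(n)}t}e_m\ge e^{-Kt}e_m\ge e^{-KT}e_m=:\delta e_m \qquad\text{componentwise.}
\]
The bound $\delta>0$ does not depend on $\varepsilon$.

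\textbf{Removing the truncation.} Choose $\varepsilon<\delta$. Then $y^\varepsilon_i(t)\vee\varepsilon=y^\varepsilon_i(t)$ for all $t\in[0,T]$ and all $i$. Hence $y^\varepsilon$ solves \eqref{dyn.n} itself, with $y^\varepsilon\gg0$. This gives existence of a positive classical solution.

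\textbf{Uniqueness among positive solutions.} Let $y,z$ be two positive solutions on $[0,T]$. By continuity and compactness there is $\rho>0$ with $y,z\ge\rho e_m$ on $[0,T]$. On $[\rho,\infty)^m$ the map $y\mapsto(y_i^p)_i$ is Lipschitz with constant $p\rho^{p-1}$. Gronwall's inequality applied to $|y-z|$ then gives $y\equiv z$.

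\textbf{Main obstacle.} The only delicate step is the truncation-independent lower bound. The key point is that the Metzler structure of $\Lambda^{(n)}$ together with $C>0$ makes the nonlinear term a nonnegative forcing. This yields a comparison with the linear flow, which can be verified directly through the variation-of-constants formula above without a separate comparison theorem.
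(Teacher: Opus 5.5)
Your proof is correct and follows the paper's route. Both arguments truncate the power nonlinearity, derive a truncation-independent lower bound by comparing with the linear flow $\psi'=\Lambda^{(n)}\psi$, take the truncation level below that bound, and obtain uniqueness from the local Lipschitz estimate away from zero. The one difference is that you prove the comparison step directly, using the Metzler structure of $\Lambda^{(n)}$ and variation of constants to get the explicit bound $e^{-KT}$, whereas the paper cites Lemmas 4.1 and 4.4 of \cite{Bo19}.
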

\begin{proof}
Put $\alpha:=\eta(1-\gamma)<0$. For any $y,z\in\mathbb{R}^m$ satisfying $y,z\geq \epsilon e_m$ with $\epsilon>0$, the following estimate holds:
\begin{eqnarray}\label{G^n.lip}
\|\Phi^{(n)}(y)-\Phi^{(n)}(z)\|\leq C(\epsilon)\|y-z\|,
\end{eqnarray}
where $C(\epsilon)>0$ is a constant depending only on $\epsilon>0$. We next show the estimate \eqref{G^n.lip}. In fact, it suffices to prove that, for each $i\in\mathcal S_I$, $|\Phi_i^{(n)}(y)-\Phi_i^{(n)}(z)|\leq C(\epsilon)\|y-z\|$ for all $y,z\in\mathbb{R}^m$ satisfying $y,z\geq \epsilon e_m$. First of all, it follows from the mean value theorem and \eqref{G^n} that
\begin{eqnarray}
|\Phi_i^{(n)}(y)-\Phi_i^{(n)}(z)|
\hspace{-0.3cm}&=&\hspace{-0.3cm}
\Big(\eta^{-\frac{\alpha}{\alpha-1}}-(1-\gamma)\eta^{-\frac{1}{\alpha-1}}\Big)\Big|y_i^{\frac{\alpha}{\alpha-1}}-z_i^{\frac{\alpha}{\alpha-1}}\Big|
\nonumber\\
\hspace{-0.3cm}&\leq&\hspace{-0.3cm}
\Big(\eta^{-\frac{\alpha}{\alpha-1}}-(1-\gamma)\eta^{-\frac{1}{\alpha-1}}\Big)\frac{\alpha}{\alpha-1}\epsilon^{\frac{1}{\alpha-1}}|y_i-z_i|,\nonumber
\end{eqnarray}
This yields the validity of the estimate \eqref{G^n.lip}.
In what follows, for any constant $a\in(0,1]$, let us consider the following truncated system given by
\begin{eqnarray}\label{varphi_a^n}
\left\{
\begin{array}{ll}
     \frac{d}{dt}\Bar{\varphi}^{(n)}_a(t)=\Lambda^{(n)}\Bar{\varphi}^{(n)}_a(t)+\Phi^{(n)}_a(\Bar{\varphi}^{(n)}_a(t)),\quad t\in(0,T];  \\
     \Bar{\varphi}^{(n)}_a(0)=e_m,
\end{array}
\right.
\end{eqnarray}
where the matrix $\Lambda^{(n)}$ is given by \eqref{A^n}. The vector-valued function $\Phi^{(n)}_a(y)=(\Phi^{(n)}_{a,i}(y);i\in\mathcal S_I)^{\top}$ for $y\in\mathbb R^m$ is defined as:
\begin{eqnarray}
\Phi^{(n)}_a(y):=\Phi^{(n)}(y\vee ae_m),\quad y\in\mathbb{R}^m,
\end{eqnarray}
with $y\vee z:=(y_i\vee z_i;i\in\mathcal S_I)^{\top}$ for $y,z\in\mathbb{R}^m$. Using the estimate \eqref{G^n.lip}, there exists a positive constant $C(a)$ depending on $a$ only such that, for all $y,z\in\mathbb R^m$,
\begin{eqnarray}\label{G_a^n.lip}
\|\Phi_{a}^{(n)}(y)-\Phi_{a}^{(n)}(z)\|\leq C(a)\|y\vee ae_m-z\vee ae_m\|\leq C(a)\|y-z\|,
\end{eqnarray}
It follows from \eqref{G_a^n.lip} that the mapping $y\mapsto \Lambda^{(n)}y+\Phi^{(n)}_a(y)$ is Lipschitz continuous. Hence, the system \eqref{varphi_a^n} has a unique classical solution $\Bar{\varphi}^{(n)}_a(t)=(\Bar{\varphi}^{(n)}_a(t,i);i\in\mathcal S_I)^{\top}$ on $[0,T]$.

We next show the positivity of the each element of $\Bar{\varphi}^{(n)}_a(t)$ for $t\in[0,T]$. To do it, consider the following linear system with unknown solution $\psi^{(n)}(t)=(\psi^{(n)}_i(t);i\in\mathcal S_I)^{\top}$ given by
\begin{eqnarray}
\label{4.16.6}
\left\{
\begin{array}{ll}
    \frac{d}{dt} {\psi}^{(n)}(t)&=\Lambda^{(n)}{\psi}^{(n)}(t),\quad t\in[0,T],
    \\
    {\psi}^{(n)}(0)&=e_m.
\end{array}
\right.
\end{eqnarray}
Recall the matrix $\Lambda^{(n)}$ specified by \eqref{A^n}. Then $\Lambda^{(n)}_{ij}=q_{ij}\geq0$ for all $i\neq j$, which implies that $y\mapsto\Lambda^{(n)}y$ is of type $K$ functions on $\mathbb R^m$ (c.f. \cite{Smith95}). Moreover, $y\mapsto\Lambda^{(n)}y$ is Lipschitz continuous on $\mathbb R^m$ with $\Lambda^{(n)}0=0$. Thus, one can verify by applying Lemma 4.1 in \cite{Bo19} that the system \eqref{4.16.6} has a unique solution satisfying $\psi^{(n)}(t)\gg0$ on $[0,T]$.

Setting
$
\epsilon^{(n)}:=\min_{i\in\mathcal S_I}\left\{\inf_{t\in[0,T]}\psi_i^{(n)}(t)\right\}.
$
Then, by the continuity of $t\mapsto\psi^{(n)}(t)$ and the fact that $\psi^{(n)}(t)\gg0$ for all $t\in[0,T]$, we have $\epsilon^{(n)}>0$. Note that, for $y\in\mathbb R^m$ and $i\in\mathcal S_I$,
\begin{eqnarray}\label{G^n>0}
\Phi^{(n)}_{a,i}(y)=\Big(\eta^{-\frac{\alpha}{\alpha-1}}-(1-\gamma)\eta^{-\frac{1}{\alpha-1}}\Big)(y_i\vee a)^{\frac{\alpha}{\alpha-1}}>0.
\end{eqnarray}
Using \eqref{G_a^n.lip}, \eqref{G^n>0}, the initial condition $\Bar{\varphi}_a^{(n)}(0)=\psi^{(n)}(0)=e_m$, the fact that $y\mapsto\Lambda^{(n)}y$ is Lipschitz continuous and of type $K$ on $\mathbb R^m$, and the comparison result given by Lemma 4.4 in \cite{Bo19}, one can claim that
\begin{equation}\label{positive.varphi.n}
\Bar{\varphi}_a^{(n)}(t)\geq\psi^{(n)}(t)\geq\epsilon^{(n)}e_m\gg 0,\quad \forall t\in[0,T].
\end{equation}
where it is recalled that the positive constant $\epsilon^{(n)}$ is independent of $a\in(0,1]$. So that, we choose $a\in(0,\epsilon^{(n)}\wedge1)$, and hence, it holds from \eqref{positive.varphi.n} that
$$\Phi^{(n)}_a(\Bar{\varphi}^{(n)}_a(t))=\Phi^{(n)}(\Bar{\varphi}^{(n)}_a(t)\vee ae_m)=\Phi^{(n)}(\Bar{\varphi}^{(n)}_a(t)),\quad \forall t\in[0,T].$$
Thus, $\Bar{\varphi}^{(n)}_a(t)\,(\geq\epsilon^{(n)}e_m)$ with $a\in(0,\epsilon^{(n)}\wedge1)$ solving the system \eqref{varphi_a^n} also satisfies the HJB system \eqref{dyn.n} on $[0,T]$. Moreover, by using \eqref{G^n.lip}, there is a unique positive classical solution to the HJB system \eqref{dyn.n}. Therefore, $\Bar{\varphi}^{(n)}_a(t)$ for $t\in[0,T]$ with $a\in(0,\epsilon^{(n)}\wedge1)$ is the unique positive classical solution for \eqref{dyn.n} on $[0,T]$. So far, the proof of the lemma is complete.
\end{proof}

We now move one step back in the recursion. Fix a state $h=0^{j_1,\ldots,j_k}$ with $0\leq k\leq n-1$. The assets indexed by $j_1,\ldots,j_k$ have already defaulted and hence receive zero portfolio weight. The surviving asset set is
$\mathcal J_k:=\{1,\ldots,n\}\setminus\{j_1,\ldots,j_k\}.$
For this state, 
$\Bar{\varphi}^{(k)}(t):=(\Bar{\varphi}(t,i,0^{j_1,\ldots,j_k});i\in\mathcal S_I)^\top,$
and let $\pi^{(k)}=(\pi_j^{(k)};j\in\mathcal J_k)^\top$ collect only the portfolio weights of surviving assets. The coefficients restricted to the surviving assets are
\begin{itemize}
\item $\lambda^{(k)}(i)=(\lambda_j(i,0^{j_1,...,j_k});j\in\mathcal J_k)^{\top}$;
\item $\widetilde\mu^{(k)}(i)=
(\widetilde\mu_j(i,0^{j_1,...,j_k});j\in\mathcal J_k)^{\top}$;
\item $\sigma^{(k)}(i)=(\sigma_{j\ell}(i);j\in\mathcal J_k)^{\top}$, $\ell\in\{1,...,d\}$;
\item $\widehat\sigma^{(k)}(i)=(\widehat\sigma_{j\ell}(i);j\in\mathcal J_k)^{\top}$, $\ell\in\{1,...,\widehat d\}$;
\item $\Sigma^{(k)}(i)=(\sigma^{(k)}(i),\widehat\sigma^{(k)}(i))$.
\end{itemize}
These definitions simply restrict the original market coefficients to the assets that have not yet defaulted.
Accordingly, the admissible feedback values at this default state are restricted to the surviving assets, and the reduced control space is
\begin{eqnarray}
\mathcal A^{(k)}:=
\left\{(\pi^{(k)},c)=(\pi^{(k)}(t,i),c(t,i))_{(t,i)\in[0,T]\times\mathcal S_I}
\in\mathcal U^{n-k}\times {\overline{\mathbb R}_+}\right\},\nonumber
\end{eqnarray}
For a vector $x=(x_i;i\in\mathcal S_I)^\top$ representing the current values of $\Bar{\varphi}^{(k)}$, the Hamiltonian at the default state $0^{j_1,\ldots,j_k}$ becomes
\begin{eqnarray}\label{H^k}
H_\eta^{(k)}(\pi^{(k)},c;i,x)
\hspace{-0.3cm}&=&\hspace{-0.3cm}c^{\eta(1-\gamma)}-c(1-\gamma)x_i+(1-\gamma)
\left[{\pi^{(k)}}^{\top}{\color{black}\widetilde\mu^{(k)}(i)}-\frac{\gamma}{2}\Big|{\Sigma^{(k)}(i)}^{\top}{\pi^{(k)}}\Big|^2\right]x_i
\nonumber\\
\hspace{-0.3cm}&&\hspace{-0.3cm}
+\sum_{j\notin \{j_1,...,j_k\}}\Big[(1-\pi^{(k)}_j)^{1-\gamma}\Bar{\varphi}^{(k+1),j}(t,i)-x_i\Big]\lambda_j^{(k)}(i).
\end{eqnarray}
Here, for each $j\notin\{j_1,\ldots,j_k\}$, we define $\Bar{\varphi}^{(k+1),j}(t,i)
:=\Bar{\varphi}(t,i,0^{j_1,\ldots,j_k,j}).$
It is the $i$-th component of the solution to the recursive HJB system \eqref{HJB.2} at the neighboring default state $0^{j_1,\ldots,j_k,j}$ reached when the surviving asset $j$ defaults.
Then, by using \eqref{H^k} and the fact that $\gamma>1$, one has that, for each $i\in\mathcal S_I$, $H_\eta^{(k)}(\pi^{(k)},c;i,x)$ is continuous and strictly convex in $(\pi^{(k)},c)\in\mathcal U^{n-k}\times\Bar{\mathbb{R}}_+$, which together with the expression of $H_\eta^{(k)}(\pi^{(k)},c;i,x)$ guarantees the existence a unique finite global minimum of $(\pi^{(k)},c)\mapsto H_\eta^{(k)}(\pi^{(k)},c;i,x)$. It follows from the first-order condition that, the optimal feedback strategy is given by
\begin{eqnarray}
c^*=c^*(i,x)=\Big(\frac{x_i}{\eta}\Big)^{\frac{1}{\eta(1-\gamma)-1}}.\nonumber
\end{eqnarray}
Hence, the corresponding HJB system \eqref{HJB.2} in this default state becomes that
\begin{eqnarray}\label{HJB.3}
\left\{
    \begin{array}{ll}
       \frac{d}{dt}\Bar{\varphi}^{(k)}(t)  &=\Lambda^{(k)}\Bar{\varphi}^{(k)}(t)+\Phi^{(k)}(t,\Bar{\varphi}^{(k)}(t)),\quad t\in[0,T];  \\
        \Bar{\varphi}^{(k)}(0) &=e_m.
    \end{array}
    \right.
\end{eqnarray}
where $\Lambda^{(k)}$ is the $m\times m$-dimensional matrix given by
\begin{eqnarray}\label{A^k}
\Lambda^{(k)}\hspace{-0.3cm}&:=&\hspace{-0.3cm}
{\rm diag}\bigg[{\color{black}(1-\gamma)r(i)-\sum_{j\notin \{j_1,...,j_k\}}\lambda^{(k)}_j(i)}
;\,i\in\mathcal S_I\bigg]+Q.
\end{eqnarray}
The coefficient $\Phi^{(k)}(t,y)=(\Phi^{(k)}_i(t,y);i\in\mathcal S_I)^{\top}$ for $(t,y)\in[0,T]\times\mathbb{R}^m$ is specified as
\begin{eqnarray}\label{G^k}
\Phi^{(k)}_i(t,y)
\hspace{-0.3cm}&:=&\hspace{-0.3cm}
\inf_{\pi^{(k)}\in\mathcal U^{n-k}}\left\{
\sum_{j\notin\{j_1,...,j_k\}}(1-\pi^{(k)}_j)^{1-\gamma}\Bar{\varphi}^{(k+1),j}(t,i)\lambda^{(k)}_j(i)+{\color{black}\Bar{H}^{(k)}(\pi^{(k)},i)y_i}\right\}
\nonumber\\
\hspace{-0.3cm}&&\hspace{1cm}
\left.
+\Big(\eta^{-\frac{\alpha}{\alpha-1}}-(1-\gamma)\eta^{-\frac{1}{\alpha-1}}\Big)y_i^{\frac{\alpha}{\alpha-1}}\right.
.
\end{eqnarray}
with $\alpha:=\eta(1-\gamma)<0$. 
In the above equation, the function
\begin{eqnarray}\label{bar.H}
\Bar{H}^{(k)}(\pi^{(k)},i):=(1-\gamma)
\left[{\pi^{(k)}}^{\top}{\color{black}\widetilde\mu^{(k)}(i)}-\frac{\gamma}{2}\Big|{\Sigma^{(k)}(i)}^{\top}{\pi^{(k)}}\Big|^2\right].\nonumber
\end{eqnarray}

The next lemma is the key regularity estimate needed for the induction step. The nonlinearity $\Phi^{(k)}$ is not globally Lipschitz near the boundary $y_i=0$, because the power $y_i^{\alpha/(\alpha-1)}$ is singular there. However, once the solution is known to stay uniformly away from zero, the following local Lipschitz estimate is sufficient.
\begin{lem}\label{lem4.4}
Let $0\leq k\leq n-1$. Assume that the HJB system \eqref{HJB.2} has a positive unique classical solution $\Bar{\varphi}^{(k+1),j}(t)$ on $t\in[0,T]$ for $j\notin\{j_1,...,j_k\}$. Then, for any $y,z\in\mathbb{R}^m$ satisfying $y,z\geq \epsilon e_m$ with $\epsilon>0$, there exists a constant $C(\epsilon)>0$ depending on $\epsilon>0$ only such that
\begin{eqnarray}
\|\Phi^{(k)}(t,y)-\Phi^{(k)}(t,z)\|\leq C(\epsilon)\|y-z\|,\quad \forall t\in[0,T].\nonumber
\end{eqnarray}
\end{lem}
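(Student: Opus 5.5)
Fix $i\in\mathcal S_I$; it suffices to show $|\Phi^{(k)}_i(t,y)-\Phi^{(k)}_i(t,z)|\le C(\epsilon)|y_i-z_i|$ uniformly in $t$, since $\Phi^{(k)}_i$ depends on $y$ only through $y_i$. The plan is to split $\Phi^{(k)}_i$ into two parts. The first is the explicit power term $\big(\eta^{-\frac{\alpha}{\alpha-1}}-(1-\gamma)\eta^{-\frac{1}{\alpha-1}}\big)y_i^{\frac{\alpha}{\alpha-1}}$. The second is the infimum term
\[
G_i(t,y_i):=\inf_{\pi\in\mathcal U^{n-k}}\Big\{\sum_{j\notin\{j_1,\ldots,j_k\}}(1-\pi_j)^{1-\gamma}\Bar{\varphi}^{(k+1),j}(t,i)\lambda^{(k)}_j(i)+\Bar H^{(k)}(\pi,i)y_i\Big\}.
\]
The power term is handled exactly as in the proof of Lemma~\ref{lem4.1}: by the mean value theorem on $[\epsilon,\infty)$, since $\frac{\alpha}{\alpha-1}\in(0,1)$, its derivative is bounded by $\frac{\alpha}{\alpha-1}\epsilon^{\frac{1}{\alpha-1}}$.

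For $G_i$, I will use the elementary fact that $|\inf_\pi f(\pi,y)-\inf_\pi f(\pi,z)|\le\sup_{\pi\in K}|f(\pi,y)-f(\pi,z)|$, provided both infima are attained in a common set $K$. The difference of the integrands is $\Bar H^{(k)}(\pi,i)(y_i-z_i)$, so the task reduces to showing that, for $y_i\ge\epsilon$, every minimizer lies in a compact set $K_\epsilon\subset\mathcal U^{n-k}$ independent of $t$ and of $y_i$. The bound $|\Bar H^{(k)}(\pi,i)|\le C(\epsilon)$ on $K_\epsilon$ then gives the claim.

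The main obstacle is this compactness, because $y_i$ ranges over the unbounded set $[\epsilon,\infty)$. Note that $\Bar H^{(k)}(\pi,i)=(1-\gamma)[\pi^\top\widetilde\mu-\frac\gamma2|\Sigma^\top\pi|^2]$ is a convex quadratic in $\pi$, coercive by positive definiteness of $\Sigma\Sigma^\top$. Its minimum value is $\Bar H(0)=0$ attained at $\pi=0$, since $\pi^\top\widetilde\mu-\frac\gamma2|\Sigma^\top\pi|^2$ is concave and may be positive; correct this, the min is $-(1-\gamma)\frac{1}{2\gamma}\widetilde\mu^\top(\Sigma\Sigma^\top)^{-1}\widetilde\mu\cdot(-1)$, i.e. a finite $m_0\le0$. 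The first sum is positive, convex, and blows up as $\pi_j\uparrow1$, with coefficients bounded above and below uniformly in $t$. This uses the induction hypothesis that the $\Bar\varphi^{(k+1),j}$ are positive and continuous on $[0,T]$, hence bounded between positive constants.

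To get the lower bound on the minimizer, compare with $\pi=0$: the value at the minimizer is at most $\sum_j\Bar\varphi^{(k+1),j}\lambda_j\le C_1$. Since $\Bar H\ge m_0$, this gives $\sum_j(1-\pi_j)^{1-\gamma}\Bar\varphi^{(k+1),j}\lambda_j\le C_1-m_0y_i$, which does not directly bound $\pi$ uniformly for large $y_i$. I would therefore argue through the first-order condition instead. Since $\partial_{\pi_j}$ of the first sum is positive and $\Bar H$ is coercive, the condition $\nabla\Bar H\, y_i=-(\gamma-1)\mathrm{diag}(\ldots)(1-\pi)^{-\gamma}$ shows that the minimizer satisfies $\pi^\top\nabla\Bar H\le 0$ direction-wise. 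This places it in the bounded set where the quadratic's concave counterpart is nonnegative, namely $\{\pi:\pi^\top\widetilde\mu\ge\frac\gamma2|\Sigma^\top\pi|^2\}$ or near it. With $y_i\ge\epsilon$ and upper barrier $\pi_j<1$, the quantity $1-\pi_j$ is bounded below using $y_i$ bounded below. If uniform-in-$y_i$ control proves delicate, I would fall back on a local version: for $y,z$ in $[\epsilon,R]$ the constant depends on $(\epsilon,R)$. This suffices for the induction, since solutions are bounded on $[0,T]$ a priori.
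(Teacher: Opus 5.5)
Your architecture matches the paper's: the mean value theorem for the power term, then restricting the infimum to a bounded set of portfolios on which $|\Bar H^{(k)}(\cdot,i)|$ is bounded. The gap is the step that carries the lemma, namely confinement uniformly in $t$ and in $y_i\in[\epsilon,\infty)$, which you never establish.

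Write $S(\pi):=\sum_{j\notin\{j_1,\ldots,j_k\}}(1-\pi_j)^{1-\gamma}\Bar{\varphi}^{(k+1),j}(t,i)\lambda^{(k)}_j(i)$ and $F(\pi,y):=S(\pi)+\Bar H^{(k)}(\pi,i)y_i$.

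\emph{The first-order argument fails.} Stationarity only says that every entry of $\nabla_\pi\Bar H^{(k)}(\pi^*,i)$ is negative. Since the signs of $\pi^*_j$ are unknown, this gives neither $\pi^{*\top}\nabla_\pi\Bar H^{(k)}(\pi^*,i)\le 0$ nor $\Bar H^{(k)}(\pi^*,i)\le 0$. Both fail when $\widetilde\mu^{(k)}(i)=0$. Then $\Bar H^{(k)}(\pi,i)=\frac{\gamma(\gamma-1)}{2}|\Sigma^{(k)}(i)^\top\pi|^2$, and the minimizer is nonzero because $\nabla_\pi S(0)$ has strictly positive entries while $\nabla_\pi\Bar H^{(k)}(0,i)=0$. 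Hence $\Bar H^{(k)}(\pi^*,i)>0$ and $\pi^{*\top}\nabla_\pi\Bar H^{(k)}(\pi^*,i)>0$.

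\emph{The fallback does not prove the statement.} A constant $C(\epsilon,R)$ is strictly weaker than what is claimed, which is dependence on $\epsilon$ alone. That uniformity is what makes $\Phi^{(k)}_a$ globally Lipschitz in Theorem~\ref{thm4.1}, and the a priori bound you invoke instead is asserted, not proved. Even the local version is not argued: your bound $S(\pi^*)\le C_1-m_0y_i$ only keeps $\pi_j$ away from $1$ and says nothing about how negative $\pi_j$ can become. That bound also needs $y_i$ bounded \emph{above}, not below. In fact, when $\gamma^{-1}(\Sigma^{(k)}(i)\Sigma^{(k)}(i)^\top)^{-1}\widetilde\mu^{(k)}(i)\notin\mathcal U^{n-k}$, some coordinate of the minimizer tends to $1$ as $y_i\to\infty$. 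So a compact $K_\epsilon\subset\mathcal U^{n-k}$ that is uniform in $y_i$ need not exist.

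\emph{The missing idea} is to read your comparison with $\pi=0$ the other way round. Since $S>0$, every $\pi$ with $F(\pi,y)\le F(0,y)=S(0)$ satisfies
$\Bar H^{(k)}(\pi,i)\le S(0)/y_i\le C_0\sum_j\lambda^{(k)}_j(i)/\epsilon$.
By positive definiteness of $\Sigma^{(k)}(i)\Sigma^{(k)}(i)^\top$, this sublevel set lies in a ball $K$ whose radius depends only on $\epsilon$. Large $y_i$ helps rather than hurts.

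This is exactly the paper's route via $C_2^{(k)}$ and $C_3(\epsilon)$. For $|\pi|\ge C_2^{(k)}$ one has $\Bar H^{(k)}(\pi,i)\ge 0$, so $y_i\Bar H^{(k)}(\pi,i)\ge\epsilon\Bar H^{(k)}(\pi,i)$. For $|\pi|\ge C_3(\epsilon)$ the objective is then at least $C_0\sum_j\lambda^{(k)}_j(i)\ge S(0)$.

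You then need neither attainment of the infima nor compactness inside the open set $\mathcal U^{n-k}$. Since $\Bar H^{(k)}(\cdot,i)$ is a quadratic polynomial, it is bounded on $K$. The inequality $\inf_{K}F(\cdot,y)\le\inf_{K}F(\cdot,z)+\sup_{K}|\Bar H^{(k)}(\cdot,i)|\,|y_i-z_i|$, together with its symmetric counterpart, gives the claim with a constant depending only on $\epsilon$.
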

\begin{proof}
It follows from \eqref{G^n} and \eqref{G^k} that, for all $(t,i,y)\in[0,T]\times\mathcal S_I\times\mathbb R^m$,
\begin{eqnarray}
\Phi_i^{(k)}(t,y)
\hspace{-0.3cm}&=&\hspace{-0.3cm}
\inf_{\pi^{(k)}\in\mathcal U^{n-k}}\left\{
\sum_{j\notin\{j_1,...,j_k\}}(1-\pi^{(k)}_j)^{1-\gamma}\Bar{\varphi}^{(k+1),j}(t,i)\lambda^{(k)}_j(i)+\Bar{H}^{(k)}(\pi^{(k)},i)y_i\right\}+\Phi^{(n)}_i(y)
\nonumber\\
\hspace{-0.3cm}&:=&\hspace{-0.3cm}
G^{(k)}_i(t,y)+\Phi^{(n)}_i(y),\nonumber
\end{eqnarray}
where $\Phi^{(n)}(y)=(\Phi_i^{(n)}(y);i\in\mathcal S_I)^\top$ for $y\in\mathbb R^m$ is defined by \eqref{G^n}. It suffices to show that, for each $i\in\mathcal S_I$, $|G_i^{(k)}(t,y)-G_i^{(k)}(t,z)|\leq C(\epsilon)\|y-z\|$ for all $y,z\in\mathbb{R}^m$ satisfying $y,z\geq \epsilon e_m$ with $\epsilon>0$. Using the recursive assumption that $\Bar{\varphi}^{(k+1),j}(t)$ on $t\in[0,T]$ is the unique positive classical solution to \eqref{HJB.2} for $j\notin\{j_1,...,j_k\}$. Then, it is continuous on $[0,T]$, which yields the existence of a constant $C_0>0$ independent of $t$ such that $\sup_{t\in[0,T]}\|\Bar{\varphi}^{(k+1),j}(t)\|\leq C_0$ for $j\notin\{j_1,...,j_k\}$. Thus, by applying \eqref{G^k}, and the fact that $\Bar{H}^{(k)}(0,i)=0$ for all $i\in\mathcal S_I$, it follows that, for all $(t,y)\in[0,T]\times \mathbb{R}^m$,
\begin{eqnarray}\label{G^k<C_0}
G^{(k)}_i(t,y)
\hspace{-0.3cm}&=&\hspace{-0.3cm}
\inf_{\pi^{(k)}\in\mathcal U^{n-k}}
\left\{\sum_{j\notin\{j_1,...,j_k\}}(1-\pi^{(k)}_j)^{1-\gamma}\Bar{\varphi}^{(k+1),j}(t,i)\lambda^{(k)}_j(i)+\Bar{H}^{(k)}(\pi^{(k)},i)y_i\right\}
\nonumber\\
\hspace{-0.3cm}&\leq&\hspace{-0.3cm}
\sum_{j\notin\{j_1,...,j_k\}}\Bar{\varphi}^{(k+1),j}(t,i)\lambda^{(k)}_j(i)+\Bar{H}^{(k)}(0,i)y_i
\nonumber\\
\hspace{-0.3cm}&\leq&\hspace{-0.3cm}
C_0\sum_{j\notin\{j_1,...,j_k\}}\lambda^{(k)}_j(i).
\end{eqnarray}
By the positive-definiteness assumption on $\Sigma(i)\Sigma(i)^{\top}$, so is $\Sigma^{(k)}(i)\Sigma^{(k)}(i)^{\top}$. Hence, there exists a constant $\delta^{(k)}>0$ such that $\Big|{\Sigma^{(k)}(i)}^{\top}{\pi^{(k)}}\Big|^2\geq \delta^{(k)}|\pi^{(k)}|^2$. Then, for any $\pi^{(k)}\in\mathcal U^{n-k}$,
\begin{eqnarray}
-\frac{\gamma(1-\gamma)}{2}\Big|{\Sigma^{(k)}(i)}^{\top}{\pi^{(k)}}\Big|^2\geq -\frac{\gamma(1-\gamma)}{2}\delta^{(k)} |\pi^{(k)}|^2.\nonumber
\end{eqnarray}
On the other hand, for any $\pi^{(k)}\in\mathcal U^{n-k}$,
\begin{eqnarray}
(1-\gamma){\pi^{(k)}}^{\top}{\color{black}\widetilde\mu^{(k)}(i)}
\geq (1-\gamma)|\pi^{(k)}|{\color{black}|\widetilde\mu^{(k)}(i)|}
\geq {\color{black}C_1^{(k)}}(1-\gamma)|\pi^{(k)}|,\nonumber
\end{eqnarray}
with the constant {\color{black}$C_1^{(k)}:=\max_{i\in\mathcal S_I}|\widetilde\mu^{(k)}(i)|\geq0$}. Then, we have
\begin{eqnarray}
\Bar{H}^{(k)}(\pi^{(k)},i)\geq-\frac{\gamma(1-\gamma)}{2}\delta^{(k)} |\pi^{(k)}|^2+{\color{black}C_1^{(k)}} (1-\gamma)|\pi^{(k)}|.\nonumber
\end{eqnarray}
We next take the nonnegative constant defined as {\color{black}$C_2^{(k)}:=\frac{2C_1^{(k)}}{\gamma\delta^{(k)}}$}. For all $\pi^{(k)}\in\{\pi^{(k)}\in\mathcal U^{n-k};|\pi^{(k)}|\geq {\color{black}C_2^{(k)}}\}$, it holds that $\Bar{H}^{(k)}(\pi^{(k)},i)\geq0$ for all $i\in\mathcal S_I$. Then, for all $\pi^{(k)}\in\{\pi^{(k)}\in\mathcal U^{n-k};|\pi^{(k)}|\geq {\color{black}C_2^{(k)}}\}$ and all $y\geq \epsilon e_m$, we deduce that
\begin{eqnarray}
	\hspace{-0.3cm}&&\hspace{-0.3cm}
\sum_{j\notin\{j_1,...,j_k\}}(1-\pi^{(k)}_j)^{1-\gamma}\Bar{\varphi}^{(k+1),j}(t,i)\lambda^{(k)}_j(i)+\Bar{H}^{(k)}(\pi^{(k)},i)y_i\nonumber\\
	\hspace{-0.3cm}&&\hspace{-0.3cm}
\geq\Bar{H}^{(k)}(\pi^{(k)},i)y_i\geq\Big[-\frac{\gamma(1-\gamma)}{2}\delta^{(k)} |\pi^{(k)}|^2+{\color{black}C_1^{(k)}} (1-\gamma)|\pi^{(k)}|\Big]\epsilon.\nonumber
\end{eqnarray}
We take the following positive constant depending on $\epsilon>0$ as
$$C_3(\epsilon):=\frac{{\color{black}C_1^{(k)}}}{\gamma\delta^{(k)}}+\sqrt{\left(\frac{{\color{black}C_1^{(k)}}}{\gamma\delta^{(k)}}\right)^2-\frac{2}{\gamma(1-\gamma)\delta^{(k)}\epsilon}C_0\sum_{j=1}^n\lambda_j^{(k)}(i)}.$$
Then, for all $\pi^{(k)}\in\{\pi^{(k)}\in\mathcal U^{n-k};\|\pi^{(k)}\|\geq C_3(\epsilon)\}$ and all $y\geq \epsilon e_m$, it holds that
\begin{eqnarray}\label{sum+C_0>c_0}
\sum_{j\notin\{j_1,...,j_k\}}(1-\pi^{(k)}_j)^{1-\gamma}\Bar{\varphi}^{(k+1),j}(t,i)\lambda^{(k)}_j(i)+\Bar{H}^{(k)}(\pi^{(k)},i)y_i\geq C_0\sum_{j\notin\{j_1,...,j_k\}}\lambda_j^{(k)}(i).
\end{eqnarray}
Combining \eqref{G^k<C_0} and \eqref{sum+C_0>c_0}, it follows that
\begin{eqnarray}\label{G^k.2}
G_i^{(k)}(t,y)
\hspace{-0.3cm}&=&\hspace{-0.3cm}
\inf_{\pi^{(k)}\in\mathcal U^{n-k}}\left\{\sum_{j\notin\{j_1,...,j_k\}}(1-\pi^{(k)}_j)^{1-\gamma}\Bar{\varphi}^{(k+1),j}(t,i)\lambda^{(k)}_j(i)+\Bar{H}^{(k)}(\pi^{(k)},i)y_i\right\}
\nonumber\\
\hspace{-0.3cm}&=&\hspace{-0.3cm}
\inf_{\substack{\pi^{(k)}\in\mathcal U^{n-k}; \\ {\|\pi^{(k)}\|}\leq C_3(\epsilon)}}\left\{\sum_{j\notin\{j_1,...,j_k\}}(1-\pi^{(k)}_j)^{1-\gamma}\Bar{\varphi}^{(k+1),j}(t,i)\lambda^{(k)}_j(i)+\Bar{H}^{(k)}(\pi^{(k)},i)y_i\right\}.
\end{eqnarray}
Moreover, due to \eqref{G^k.2}, it follows that
\begin{eqnarray}\label{result}
G_i^{(k)}(t,y)
\hspace{-0.3cm}&=&\hspace{-0.3cm}
\inf_{\substack{\pi^{(k)}\in\mathcal U^{n-k}; \\ {\|\pi^{(k)}\|}\leq C_3(\epsilon)}}\left[\sum_{j\notin\{j_1,...,j_k\}}(1-\pi^{(k)}_j)^{1-\gamma}\Bar{\varphi}^{(k+1),j}(t,i)\lambda^{(k)}_j(i)\right.\nonumber\\
\hspace{-0.3cm}&&\hspace{2cm}
\left.+\Bar{H}^{(k)}(\pi^{(k)},i)z_i+\Bar{H}^{(k)}(\pi^{(k)},i)(y_i-z_i)\right]
\nonumber\\
\hspace{-0.3cm}&\leq&\hspace{-0.3cm}
\inf_{\substack{\pi^{(k)}\in\mathcal U^{n-k}; \\ {\|\pi^{(k)}\|}\leq C_3(\epsilon)}}\left[\sum_{j\notin\{j_1,...,j_k\}}(1-\pi^{(k)}_j)^{1-\gamma}\Bar{\varphi}^{(k+1),j}(t,i)\lambda^{(k)}_j(i)+\Bar{H}^{(k)}(\pi^{(k)},i)z_i+C(\epsilon)|y_i-z_i|\right]
\nonumber\\
\hspace{-0.3cm}&=&\hspace{-0.3cm}
G_i^{(k)}(t,z)+C(\epsilon)|y_i-z_i|.
\end{eqnarray}
Here, the finite positive constant $C(\epsilon)=\max_{i\in\mathcal S_I}C^{(i)}(\epsilon)$, where, for $i\in\mathcal S_I$,
$$C^{(i)}(\epsilon):=\sup_{\substack{\pi^{(k)}\in\mathcal U^{n-k}; \\ {\|\pi^{(k)}\|}\leq C_3(\epsilon)}}\Bar{H}^{(k)}(\pi^{(k)},i).$$
Note that the constant $C^{(i)}(\epsilon)$ given above is nonnegative and finite for all $i\in\mathcal S_I$. By using \eqref{result}, we derive that $|G_i^{(k)}(t,y)-G_i^{(k)}(t,z)|\leq C(\epsilon)\|y-z\|$ for any $y,z\in\mathbb{R}^m$ satisfying $y,z\geq \epsilon e_m$ with $\epsilon>0$. Thus, the proof of the lemma is complete.
\end{proof}

We can now state the induction step. Assuming that all neighboring states with one additional default have already been solved, Theorem~\ref{thm4.1} proves that the current state also admits a unique positive classical solution. 

\begin{thm}\label{thm4.1}
Let $k\in\{0,1,\ldots,n-1\}$. Assume that the HJB system \eqref{HJB.2} has a unique positive classical solution $\Bar{\varphi}^{(k+1),j}(t)$ on $[0,T]$ for all $j\notin\{j_1,...,j_k\}$. Then, there exists a unique positive classical solution $\Bar{\varphi}^{(k)}(t)$ on $t\in[0,T]$ to the HJB system \eqref{HJB.2} at the default state $h=0^{j_1,...,j_k}$, i.e., the HJB system \eqref{HJB.3} has a unique positive classical solution.
\end{thm}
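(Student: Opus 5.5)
We mimic the proof of Lemma~\ref{lem4.1}, using Lemma~\ref{lem4.4} in place of \eqref{G^n.lip}. The plan has four steps: truncate the nonlinearity, find a positive subsolution that does not depend on the truncation level, choose the truncation below that bound, and then argue uniqueness.

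\textbf{Step 1 (truncation).} For $a\in(0,1]$ we consider the truncated system
\[
\tfrac{d}{dt}\Bar{\varphi}^{(k)}_a(t)=\Lambda^{(k)}\Bar{\varphi}^{(k)}_a(t)+\Phi^{(k)}(t,\Bar{\varphi}^{(k)}_a(t)\vee ae_m),\qquad \Bar{\varphi}^{(k)}_a(0)=e_m .
\]
By Lemma~\ref{lem4.4} with $\epsilon=a$, and since $y\mapsto y\vee ae_m$ is $1$-Lipschitz, the right-hand side is globally Lipschitz in $y$, uniformly in $t$.

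We also need continuity in $t$. The map $t\mapsto G^{(k)}_i(t,y)$ is an infimum over the compact set $\{\|\pi^{(k)}\|\le C_3(\epsilon)\}$, with $1-\pi_j\ge$ some positive margin there. On that set the integrand is jointly continuous in $(t,\pi)$, because each $\Bar{\varphi}^{(k+1),j}$ is continuous. Hence the infimum is continuous in $t$.

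One small point needs checking. On the compact set the factor $(1-\pi_j)^{1-\gamma}$ blows up as $\pi_j\to1$. Since $\Bar{\varphi}^{(k+1),j}\ge\epsilon^{(k+1)}>0$ by the inductive positivity, minimizers stay away from $\pi_j=1$, so the infimum can be taken over a compact subset of $\mathcal U^{n-k}$. Cauchy--Lipschitz then gives a unique classical solution $\Bar{\varphi}^{(k)}_a$ on $[0,T]$.

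\textbf{Step 2 (lower bound independent of $a$).} We need a subsolution that does not depend on $a$. The nonlinearity splits as $\Phi^{(k)}=G^{(k)}+\Phi^{(n)}$, and $\Phi^{(n)}_a>0$ by \eqref{G^n>0}. Any lower bound for $G^{(k)}$ must hold uniformly in $y\ge ae_m$, and since $\Bar H^{(k)}$ can be positive for moderate $\pi$, the infimum could a priori be very negative when $y_i$ is large.

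The fix is to use $\Bar H^{(k)}(\pi,i)\ge -\kappa$ with
\[
\kappa:=\max_i\sup_{\pi}(\gamma-1)\Bigl[\pi^\top\widetilde\mu^{(k)}(i)-\tfrac{\gamma}{2}|\Sigma^{(k)}(i)^\top\pi|^2\Bigr]<\infty .
\]
This is a quadratic with a negative definite leading part, so $\kappa$ is finite. Together with the fact that the default term is nonnegative, this gives
\[
G^{(k)}_i(t,y)\ge -\kappa\,(y_i\vee a)\ge-\kappa y_i-\kappa a
\]
for the truncated argument. We want to avoid the additive $-\kappa a$. On the region $y_i\ge a$ this is simply $-\kappa y_i$. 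When $y_i<a$ the truncated value equals $G_i(t,ae_m)\ge-\kappa a$, and there $\Phi^{(n)}_i(ae_m)=C a^{\alpha/(\alpha-1)}$ dominates $\kappa a$ for small $a$, because $\alpha/(\alpha-1)\in(0,1)$. So for $a$ small the total right-hand side satisfies $\ge -\kappa y_i$ on $y\ge0$.

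We therefore take the linear comparison system $\dot\psi=(\Lambda^{(k)}-\kappa I)\psi$ with $\psi(0)=e_m$. Its off-diagonal entries are $q_{ij}\ge0$, so it is of type $K$. Lemma~4.1 in \cite{Bo19} gives $\psi\gg0$, and the comparison Lemma~4.4 in \cite{Bo19} gives $\Bar{\varphi}^{(k)}_a\ge\psi\ge\epsilon^{(k)}e_m$, with $\epsilon^{(k)}>0$ independent of $a$.

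\textbf{Step 3 (removing the truncation).} Choose $a<\epsilon^{(k)}\wedge1$, also small enough for the domination argument in Step 2. Then the truncation is inactive along the trajectory, so $\Bar{\varphi}^{(k)}_a$ solves \eqref{HJB.3} and is positive.

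\textbf{Step 4 (uniqueness).} Let $\phi_1,\phi_2$ be two positive classical solutions. On $[0,T]$ both are bounded below by some $\epsilon>0$, by continuity and compactness. Lemma~\ref{lem4.4} gives a Lipschitz constant on $\{y\ge\epsilon e_m\}$, and Gronwall's inequality yields $\phi_1=\phi_2$.

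\textbf{Main obstacle.} The hard part is Step 2: proving comparison for the truncated system when $G^{(k)}$ can be negative. This requires the bound $\Bar H^{(k)}\ge-\kappa$ and the handling of the region $y_i<a$. The type-$K$ condition for the comparison lemma also has to be verified for the full nonlinear right-hand side, i.e. monotonicity in the off-diagonal components. This holds because $\Phi^{(k)}_i$ depends only on $y_i$ (and $t$).
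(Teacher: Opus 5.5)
Your proposal is correct and follows essentially the paper's route. You truncate at level $a$, use nonnegativity of the default term and $\Bar{H}^{(k)}\ge-\kappa$ to bound the truncated system from below by the linear Metzler system $\dot\psi=(\Lambda^{(k)}-\kappa I)\psi$ (essentially the paper's system \eqref{psi1}), choose $a$ below the resulting $a$-independent lower bound so the truncation is inactive, and get uniqueness from Lemma~\ref{lem4.4} and Gronwall. The one difference is bookkeeping: the paper passes through the auxiliary systems \eqref{psi11} and \eqref{psi1a} so that every comparison inequality holds on all of $\mathbb{R}^m$, whereas your one-step comparison via $Ca^{\alpha/(\alpha-1)}\ge\kappa a$ gives the inequality only on $\{y\ge0\}$, which suffices for a first-touching (Kamke) argument since $\psi\gg0$ but should be stated explicitly rather than citing Lemma~4.4 of \cite{Bo19} as a black box.
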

\begin{proof}
We first consider the following dynamical system given by
\begin{eqnarray}\label{psi1}
\left\{
\begin{array}{ll}
     \frac{d}{dt}{\psi}(t)=\Lambda^{(k)}{\psi}(t)+{f}({\psi}(t)),\quad t\in(0,T];  \\
     {\psi}(0)=e_m,
\end{array}
\right.
\end{eqnarray}
Here, $\Lambda^{(k)}$ is the $m\times m$-dimensional matrix given by \eqref{A^k}, and the vector-valued function ${f}(x)=({f}_{i}(x);i\in\mathcal S_I)^{\top}$ is defined as
\begin{eqnarray}
{f}_{i}(x):=
\inf_{\pi^{(k)}\in\mathcal U^{n-k}}\Bar{H}^{(k)}(\pi^{(k)},i)x_i,\quad \forall i\in\mathcal S_I.\nonumber
\end{eqnarray}
Since the mapping $x\mapsto \Lambda^{(k)}x+{f}(x)$ is Lipschitz continuous, is of type $K$, and satisfies $\Lambda^{(k)}0+{f}(0)=0$, one can deduce from Lemma 4.1 in \cite{Bo19} that the dynamical system \eqref{psi1} has a unique positive solution $\psi(t)=(\psi(t,i);i\in\mathcal S_I)^{\top}\gg0$ on $[0,T]$. 

For any constant $a\in(0,\infty)$, we introduce the following truncated dynamical system specified as:
\begin{eqnarray}\label{psi11}
\left\{
\begin{array}{ll}
     \frac{d}{dt}{\psi}_1(t)=\Lambda^{(k)}{\psi}_1(t)+\Tilde{f}({\psi}_1(t))
     ,\quad t\in(0,T];  \\
     {\psi}_1(0)=e_m,
\end{array}
\right.
\end{eqnarray}
where the vector-valued function $\Tilde{f}(x)=(\Tilde{f}_{i}(x);i\in\mathcal S_I)^{\top}$ is given by
\begin{eqnarray}\label{tilde.f1}
\Tilde{f}_{i}(x):=
\Big(\eta^{-\frac{\alpha}{\alpha-1}}-(1-\gamma)\eta^{-\frac{1}{\alpha-1}}\Big)(x_i\vee a)^{\frac{\alpha}{\alpha-1}}+
f_i(x),\quad \forall i\in\mathcal S_I.
\end{eqnarray}
Letting $\alpha:=\eta(1-\gamma)<0$ and recall that $1-\gamma<0$, one finds that the mapping $x\mapsto \Lambda^{(k)}x+\Tilde{f}(x)$ is Lipschitz continuous, and hence the system \eqref{psi11} has a unique classical solution ${\psi}_1(t)=(\psi_1(t,i);i\in\mathcal S_I)^{\top}$. On the other hand, because the mapping
$$x\mapsto \Big(\eta^{-\frac{\alpha}{\alpha-1}}-(1-\gamma)\eta^{-\frac{1}{\alpha-1}}\Big)(x_i\vee a)^{\frac{\alpha}{\alpha-1}}$$
is positive-valued and Lipschitz continuous, and the mapping $x\mapsto \Lambda^{(k)}x+{f}(x)$ is Lipschitz continuous and of type $K$, it follows from Lemma 4.4 in \cite{Bo19} or \cite{Smith95} that the unique classical solution $\psi_1=(\psi_1(t))_{t\in[0,T]}$ of Eq.~\eqref{psi11} dominates the unique classical solution $\psi=(\psi(t))_{t\in[0,T]}$ of the system \eqref{psi1}, i.e., $\psi_1(t)\geq\psi(t)\gg0$ on $[0,T]$.

In the sequel, we take the constant specified as
\begin{eqnarray}\label{a}
a:=\min_{1\leq i\leq m}\left\{\inf_{t\in[0,T]}\psi_1(t,i)\right\}.
\end{eqnarray}
Due to the continuity of $t\mapsto\psi_1(t)$ and the fact that $\psi_1(t)\gg0$ on $[0,T]$, one knows that the constant $a\in(0,\infty)$. We then consider the following truncated dynamical system described as:
\begin{eqnarray}\label{psi1a}
\left\{
\begin{array}{ll}
     \frac{d}{dt}{\psi}_{1a}(t)=\Lambda^{(k)}{\psi}_{1a}(t)+\Tilde{f}_{a}({\psi}_{1a}(t)),\quad t\in(0,T];  \\
     {\psi}_{1a}(0)=e_m,
\end{array}
\right.
\end{eqnarray}
Above, the vector-valued function $\Tilde{f}_{a}(x)=(\Tilde{f}_{a,i}(x);i\in\mathcal S_I)^{\top}$ is given by
\begin{eqnarray}
\label{4.44.6}
\Tilde{f}_{a,i}(x):=
\Big(\eta^{-\frac{\alpha}{\alpha-1}}-(1-\gamma)\eta^{-\frac{1}{\alpha-1}}\Big)(x_i\vee a)^{\frac{\alpha}{\alpha-1}}+
\inf_{\pi^{(k)}\in\mathcal U^{n-k}}\Bar{H}^{(k)}(\pi^{(k)},i)(x_i\vee a).
\end{eqnarray}
It is seen that the function $x\mapsto\Lambda^{(k)}x+\Tilde{f}_a(x)$ is Lipschitz continuous. This implies that the system \eqref{psi1a} admits a unique classical solution $\psi_{1a}(t)=(\psi_{1a}(t,i);i\in\mathcal S_I)^{\top}$. Moreover, it follows from $\psi_1(t)\geq ae_m\gg0$ on $[0,T]$ that
\begin{eqnarray}
\Tilde{f}_{a}(\psi_1(t))=\Tilde{f}(\psi_1(t)\vee ae_m)=\Tilde{f}(\psi_1(t)),\nonumber
\end{eqnarray}
which implies that the unique solution $\psi=(\psi_1(t))_{t\in[0,T]}$ to the system \eqref{psi11} coincides with the unique solution to Eq.~\eqref{psi1a}, i.e., $\psi_{1a}=\psi_1$ on $[0,T]$.

We next consider the following truncated dynamical system given by
\begin{eqnarray}\label{psi2a}
\left\{
\begin{array}{ll}
     \frac{d}{dt}{\psi}_{2a}(t)=\Lambda^{(k)}{\psi}_{2a}(t)+\Phi^{(k)}_{a}(t,{\psi}_{2a}(t)),\quad t\in(0,T];  \\
     {\psi}_{2a}(0)=e_m,
\end{array}
\right.
\end{eqnarray}
with the the vector-valued function $\Phi^{(k)}_{a}(t,x)=(\Phi^{(k)}_{a,i}(t,x);i\in\mathcal S_I)^{\top}$ for $(t,x)\in[0,T]\times\mathbb R^m$ specifying as
\begin{eqnarray}\label{tilde.f2}
\Phi^{(k)}_{a}(t,x):=\Phi^{(k)}(t,x\vee ae_m).
\end{eqnarray}
Above, the mapping $\Phi^{(k)}(t,x)=(\Phi^{(k)}_i(t,x);i\in\mathcal S_I)^{\top}$ is given by
\begin{eqnarray}
\Phi^{(k)}_i(t,x)
\hspace{-0.3cm}&=&\hspace{-0.3cm}
\inf_{\pi^{(k)}\in\mathcal U^{n-k}}\left\{
\sum_{j\notin\{j_1,...,j_k\}}(1-\pi^{(k)}_j)^{1-\gamma}\Bar{\varphi}^{(k+1),j}(t,i)\lambda^{(k)}_j(i)+\Bar{H}^{(k)}(\pi^{(k)},i)x_i\right\}
\nonumber\\
\hspace{-0.3cm}&&\hspace{1cm}
\left.
+\Big(\eta^{-\frac{\alpha}{\alpha-1}}-(1-\gamma)\eta^{-\frac{1}{\alpha-1}}\Big)x_i^{\frac{\alpha}{\alpha-1}}\right.
\end{eqnarray}
By Lemma \ref{lem4.4}, one can verify that, for all $t\in[0,T]$ and $y,z\in\mathbb R^m$,
\begin{eqnarray}\label{G^k_a.lip}
\|\Phi^{(k)}_a(t,y)-\Phi^{(k)}_a(t,z)\|\leq C(a)\|(y\vee ae_m)-(z\vee ae_m)\|\leq C(a)\|y-z\|,
\end{eqnarray}
where the positive constant $C(a)$ depends only on $a>0$.
By applying \eqref{G^k_a.lip}, the mapping $x\mapsto \Lambda^{(k)}x+\Phi^{(k)}_{a}(t,x)$ is Lipschitz continuous uniformly in $t\in[0,T]$, and hence the system \eqref{psi2a} has a unique classical solution ${\psi}_{2a}(t)=({\psi}_{2a}(t,i);i\in\mathcal S_I)^{\top}$ on $[0,T]$. On the other hand, combine \eqref{4.44.6} and \eqref{tilde.f2} to have that, for all $i\in\mathcal S_I$,
\begin{eqnarray}
\Phi^{(k)}_{a,i}(t,x)
\hspace{-0.3cm}&=&\hspace{-0.3cm}
\inf_{\pi^{(k)}\in\mathcal U^{n-k}}\left\{
\sum_{j\notin\{j_1,...,j_k\}}(1-\pi^{(k)}_j)^{1-\gamma}\Bar{\varphi}^{(k+1),j}(t,i)\lambda^{(k)}_j(i)+\Bar{H}^{(k)}(\pi^{(k)},i)(x_i\vee a)\right\}
\nonumber\\
\hspace{-0.3cm}&&\hspace{1cm}
\left.
+\Big(\eta^{-\frac{\alpha}{\alpha-1}}-(1-\gamma)\eta^{-\frac{1}{\alpha-1}}\Big)(x_i\vee a)^{\frac{\alpha}{\alpha-1}}\right.
\nonumber\\
\hspace{-0.3cm}&\geq&\hspace{-0.3cm}
\inf_{\pi^{(k)}\in\mathcal U^{n-k}}\Bar{H}^{(k)}(\pi^{(k)},i)(x_i\vee a)+\Big(\eta^{-\frac{\alpha}{\alpha-1}}-(1-\gamma)\eta^{-\frac{1}{\alpha-1}}\Big)(x_i\vee a)^{\frac{\alpha}{\alpha-1}}
\nonumber\\
\hspace{-0.3cm}&=&\hspace{-0.3cm}
\Tilde{f}_{a,i}(x).\nonumber
\end{eqnarray}
This, by Lemma 4.4 in \cite{Bo19} with the initial condition $\psi_{2a}(0)=\psi_{1a}(0)=e_m$, $x\mapsto \Lambda^{(k)}x+\Tilde{f}_a(x)$ is Lipschitz continuous and of type $K$, and $x\mapsto \Phi^{(k)}_{a,i}(t,x)-\Tilde{f}_a(x)$ is positive-valued and Lipschitz continuous uniformly in $t\in[0,T]$. This yields that $\psi_{2a}(t)\geq\psi_{1a}(t)=\psi_{1}(t)\geq a e_m\gg 0$ for all $t\in[0,T]$.
Hence, it holds that 
\begin{eqnarray}
\Phi^{(k)}_a(t,\psi_{2a}(t))=\Phi^{(k)}(t,\psi_{2a}(t)\vee ae_m)=\Phi^{(k)}(t,\psi_{2a}(t)),\quad \forall t\in[0,T].\nonumber
\end{eqnarray}
In this vein, the function $\psi_{2a}(t)\,(\geq ae_m)$ with $a$ given by \eqref{a} solving the HJB system \eqref{psi2a}, also solves the HJB system \eqref{HJB.3} on $[0,T]$. Furthermore, by Lemma \ref{lem4.4} (with $\epsilon$ chosen to be less than or equal to $a$ given as in \eqref{a}), there is a unique classical solution to the HJB system \eqref{HJB.3}. Hence, $\psi_{2a}(t)\geq ae_m$ is the unique classical solution to the HJB system \eqref{HJB.3} on $[0,T]$. This completes the proof of the theorem.
\end{proof}

Theorem~\ref{thm4.1} gives the positive solution of the initial-value system \eqref{HJB.2}. Returning to the original time variable, $\varphi^{(k)}(t):=\Bar{\varphi}^{(k)}(T-t)$ solves \eqref{HJB} at the default state $h=0^{j_1,\ldots,j_k}$. The lower bound $\Bar{\varphi}^{(k)}(t)\geq ae_m$ is important because it allows the portfolio minimization in \eqref{G^k} to be restricted to the compact ball obtained in \eqref{G^k.2}. Thus, for all $i\in\mathcal S_I$,
\begin{eqnarray}
\hspace{-0.3cm}&&\hspace{-0.3cm}
G_i^{(k)}(t,{\varphi}^{(k)}(T-t))
\nonumber\\
\hspace{-0.3cm}&=&\hspace{-0.3cm}
\inf_{\substack{\pi^{(k)}\in\mathcal U^{n-k}; \\ {\|\pi^{(k)}\|}\leq C_3(a)}}\left[\sum_{j\notin\{j_1,...,j_k\}}(1-\pi^{(k)}_j)^{1-\gamma}\Bar{\varphi}^{(k+1),j}(t,i)\lambda^{(k)}_j(i)+\Bar{H}^{(k)}(\pi^{(k)},i){\varphi}^{(k)}(T-t,i)\right].
\nonumber
\end{eqnarray}
Since $\gamma>1$, it is not difficult to verify that, for any $i\in\mathcal S_I$, the objective function on the right-hand side of the above equation is continuous and strictly convex in $\pi^{(k)}\in\mathcal U^{n-k}$. In addition, the policy space of the portfolio $\mathcal U^{n-k}\cap\{\pi^{(k)}\in\mathcal U^{n-k};\|\pi^{(k)}\|\leq C_3(a)\}$ is compact. Hence, we have

\begin{lem}\label{Prop.4.1}
Fixing a default state $h=0^{j_1,...,j_k}$ with $k\in\{0,1,\ldots,n\}$. Let $\varphi(t,h)=(\varphi(t,i,h);i\in\mathcal S_I)^{\top}$ be the unique classical solution of the HJB system \eqref{HJB} on $[0,T]$. Actually, for $k=n$, $\varphi(t,h)=\Bar{\varphi}(T-t,e_n^{\top})$ is given by Lemma \ref{lem4.2}; and, for $0\leq k\leq n-1$, $\varphi(t,h)=\Bar{\varphi}^{(k)}(T-t)$ is given by Theorem \ref{thm4.1}. Then, we have
\begin{itemize}
\item the unique optimal investment (feedback) function $\pi_j^*(t,i,h)$ with $j\in\{j_1,...,j_k\}$ is given by $\pi_j^*(t,i,h)=0$ for all $(t,i)\in[0,T]\times\mathcal S_I$ and $1\leq k\leq n$;
\item the unique optimal investment (feedback)  $\pi^{(k),*}(t,i,h):=(\pi^{(k),*}_{j}(t,i,h);j\notin\{j_1,...,j_k\})^{\top}$ for the surviving risky assets is given by, for all $(t,i)\in[0,T]\times\mathcal S_I$ and $0\leq k\leq n-1$,
\begin{eqnarray}\label{opti.strategy.inv}
\hspace{-0.3cm}&&\hspace{-0.3cm}
\pi^{(k),*}(t,i,h)
\nonumber\\
\hspace{-0.3cm}&=&\hspace{-0.3cm}
\mathop{\arg\min}\limits_{\substack{\pi^{(k)}\in\mathcal U^{n-k}; \\ {\|\pi^{(k)}\|}\leq C_3(a)}}\left\{
\sum_{\ell\notin\{j_1,...,j_k\}}(1-\pi^{(k)}_{\ell})^{1-\gamma}{\varphi}(T-t,i,\widetilde{h}^{\ell})\lambda^{(k)}_{\ell}(i)+\Bar{H}^{(k)}(\pi^{(k)},i)\varphi(T-t,i,h)\right\}
\nonumber\\
\hspace{-0.3cm}&=&\hspace{-0.3cm}
\mathop{\arg\min}\limits_{\pi^{(k)}\in\mathcal U^{n-k}}\left\{
\sum_{\ell\notin\{j_1,...,j_k\}}(1-\pi^{(k)}_{\ell})^{1-\gamma}{\varphi}(T-t,i,\widetilde{h}^{\ell})\lambda^{(k)}_{\ell}(i)+\Bar{H}^{(k)}(\pi^{(k)},i)\varphi(T-t,i,h)\right\}
,\nonumber
\end{eqnarray}
and consequently $\pi_j^*(t,i,h)=\pi_j^{(k),*}(t,i,h)$ for each $j\notin\{j_1,...,j_k\}$,
where the constant $a\in(0,\infty)$ is given by \eqref{a}.
\item the unique optimal consumption feedback function ${c}^*(t,i,h)$ is given by
\begin{eqnarray}\label{opti.strategy.cons}
{\color{black}c^*(t,i,h)=\Big(\frac{{\varphi}(T-t,i,h)}{\eta}\Big)^{\frac{1}{\eta(1-\gamma)-1}}.}
\end{eqnarray}
\end{itemize}
\end{lem}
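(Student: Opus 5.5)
The plan is to read the three items off the pointwise minimization of the reduced Hamiltonian at each default state, using the positive solution and lower bound supplied by Lemma~\ref{lem4.2} and Theorem~\ref{thm4.1}. The first item is structural. By the admissibility requirement $\pi_t^j=(1-H_{t-}^j)\pi_t^j$, and because the Hamiltonian \eqref{H} involves $\pi$ only through $(I-\operatorname{diag}(h))\pi$ and through $(1-h^j)$ in the jump sum, the weights of defaulted assets $j\in\{j_1,\ldots,j_k\}$ do not enter $H_\eta$. Setting them to zero is therefore forced by the constraint and is the unique admissible choice. For $k=n$ this gives $\pi^*\equiv 0$, as already used to derive \eqref{dyn.n}.

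For the consumption item, I will note that after fixing $(t,i)$ the Hamiltonian \eqref{H^k} separates additively into a $c$-part $c^{\alpha}-c(1-\gamma)x_i$, with $\alpha=\eta(1-\gamma)<0$, and a $\pi^{(k)}$-part. The infimum over $(\pi^{(k)},c)$ is therefore the sum of two independent infima. Take $x_i=\varphi(T-t,i,h)$, which is strictly positive by Lemma~\ref{lem4.2} and Theorem~\ref{thm4.1}. Then $c\mapsto c^\alpha-(1-\gamma)x_ic$ is strictly convex on $\mathbb R_+$ and tends to $+\infty$ at both ends. Its unique critical point solves $\alpha c^{\alpha-1}=(1-\gamma)x_i$, that is $c^{\alpha-1}=x_i/\eta$, which gives \eqref{opti.strategy.cons}. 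This is exactly the computation leading to \eqref{opti.c}, now carried out with $x_i$ evaluated along the solution.

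For the portfolio item, fix $0\le k\le n-1$ and put $y=\varphi(T-\cdot,\cdot,h)$. By Theorem~\ref{thm4.1}, $y\ge a e_m$ with $a$ from \eqref{a}. The argument in Lemma~\ref{lem4.4} with $\epsilon=a$ then gives \eqref{G^k.2}: outside the ball $\{\|\pi^{(k)}\|\le C_3(a)\}$ the objective is at least $C_0\sum_j\lambda_j^{(k)}(i)$, which is the value at $\pi^{(k)}=0$. So the infimum over $\mathcal U^{n-k}$ equals the infimum over the bounded set, and this justifies the equality of the two argmin expressions. I will also show that the objective tends to $+\infty$ as any $\pi_j\uparrow 1$, because $(1-\pi_j)^{1-\gamma}\to\infty$ with $\varphi(\cdot,\widetilde h^{j})>0$ and $\lambda_j>0$, while the remaining terms stay bounded on the ball. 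Hence the minimization can further be restricted to the compact set $\{\|\pi^{(k)}\|\le C_3(a),\ \pi_j\le 1-\delta\}$ for a small $\delta>0$. A continuous function attains its minimum there, so a minimizer exists.

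Uniqueness comes from strict convexity. Since $\gamma>1$, the map $\pi_j\mapsto(1-\pi_j)^{1-\gamma}$ is convex on $(-\infty,1)$ and its coefficient $\varphi\lambda$ is positive. The term $\Bar H^{(k)}(\pi^{(k)},i)\,y_i$ is the sum of a linear term and $-\tfrac{\gamma(1-\gamma)}{2}y_i\,|{\Sigma^{(k)}}^\top\pi^{(k)}|^2$. The latter is strictly convex because $y_i>0$, $\gamma(\gamma-1)>0$, and $\Sigma^{(k)}{\Sigma^{(k)}}^\top$ is positive definite. The main obstacle is this existence step: \eqref{G^k.2} bounds the norm but not the distance to the open boundary $\pi_j=1$, so compactness needs the blow-up argument above. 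That argument uses the positivity of the neighbouring solutions $\varphi(\cdot,\widetilde h^j)$ from the induction hypothesis. Finally, the joint minimizer of the separated Hamiltonian is $(\pi^{(k),*},c^*)$, and extending it by zeros on the defaulted coordinates gives $\pi^*_j=\pi^{(k),*}_j$ for the surviving assets.
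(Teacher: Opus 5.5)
Your proposal is correct and follows the paper's own route. The lower bound $\Bar{\varphi}^{(k)}\ge ae_m$ from Theorem~\ref{thm4.1} restricts the portfolio minimization to the ball $\{\|\pi^{(k)}\|\le C_3(a)\}$ via \eqref{G^k.2}; strict convexity ($\gamma>1$, positivity of $\varphi$, positive definiteness of $\Sigma^{(k)}(i){\Sigma^{(k)}(i)}^{\top}$) gives uniqueness; and the first-order condition in $c$ gives \eqref{opti.strategy.cons}. Your blow-up argument as $\pi_j\uparrow 1$ is actually needed: the paper's claim that $\mathcal U^{n-k}\cap\{\|\pi^{(k)}\|\le C_3(a)\}$ is compact fails in general (the set is not closed once $C_3(a)\ge 1$, since $\mathcal U=(-\infty,1)$ is open), so your step is what secures existence of the minimizer. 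One small imprecision: $C_0\sum_j\lambda^{(k)}_j(i)$ is an upper bound for the objective at $\pi^{(k)}=0$, not its value, which is all your argument requires.
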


The next estimate is not merely technical: it shows that the optimal investment strategy stays uniformly bounded and remains strictly below the default barrier $\pi_j=1$. These bounds will be used in the verification theorem to prove admissibility of the feedback control and to control the jump terms.

\begin{lem}\label{lem4.5}
At any default state $h=0^{j_1,...,j_k}$ for $0\leq k\leq n$, the optimal portfolio (feedback) strategy $\pi^{*}$ given in Lemma \ref{Prop.4.1} satisfy that, there exists two constants $C_{\pm}>0$ such that
$$
\left\{
\begin{array}{l}
\displaystyle\sup\limits_{(t,i,h)\in[0,T]\times\mathcal S_I\times\mathcal S_H}\|\pi^{*}(t,i,h)\|\leq C_{+},\\[2mm]
\displaystyle\min_{j\in\{1...,n\}}\inf\limits_{(t,i,h)\in[0,T]\times\mathcal S_I\times\mathcal S_H}(1-\pi^{*}_j(t,i,h))\geq C_{-}.
\end{array}
\right.
$$
\end{lem}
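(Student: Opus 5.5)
Since $\mathcal S_H\times\mathcal S_I$ is finite, it suffices to prove both bounds separately at each default state $h=0^{j_1,\ldots,j_k}$ and each regime $i$, with constants uniform in $t\in[0,T]$; the global $C_\pm$ are then the worst of finitely many constants. At the all-default state ($k=n$) and for defaulted coordinates $j\in\{j_1,\ldots,j_k\}$, Lemma~\ref{Prop.4.1} gives $\pi_j^*=0$, so $\|\pi_j^*\|=0$ and $1-\pi_j^*=1$. The real work is therefore at states $0\le k\le n-1$ and surviving coordinates $j\notin\{j_1,\ldots,j_k\}$.

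\emph{Upper bound.} By Theorem~\ref{thm4.1} and the construction in its proof, $\bar\varphi^{(k)}(t)\ge a e_m$ on $[0,T]$, with $a>0$ given by \eqref{a}. Lemma~\ref{Prop.4.1} shows that the minimizer lies in the ball $\{\|\pi^{(k)}\|\le C_3(a)\}$ obtained in \eqref{G^k.2}. The constant $C_3(a)$ depends only on $a$, $\delta^{(k)}$, $C_1^{(k)}$, $C_0$ and the bounded intensities, and not on $t$. Taking the maximum over the finitely many states gives $C_+$.

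\emph{Lower bound on $1-\pi_j^*$.} This is the main obstacle, because the ball bound alone does not keep $\pi_j^*$ away from $1$. I would compare the minimal value with the value at $\pi^{(k)}=0$. Write $F_i(t,\pi)$ for the objective in \eqref{opti.strategy.inv}. Then
\[
F_i(t,\pi^*)\le F_i(t,0)=\sum_{\ell}\bar\varphi^{(k+1),\ell}(t,i)\lambda^{(k)}_\ell(i)\le C_0\sum_\ell\lambda_\ell^{(k)}(i).
\]
In the other direction, the jump terms are all nonnegative, so
\[
F_i(t,\pi^*)\ge (1-\pi_j^*)^{1-\gamma}\,\bar\varphi^{(k+1),j}(t,i)\lambda_j^{(k)}(i)+\bar H^{(k)}(\pi^*,i)\bar\varphi^{(k)}(t,i).
\]
Three facts control the right-hand side:
\begin{itemize}
\item Since $\|\pi^*\|\le C_+$, we get $\bar H^{(k)}(\pi^*,i)\ge -K$ for a constant $K$ independent of $t$.
\item By continuity on $[0,T]$, $\bar\varphi^{(k)}(t,i)\le C_0'$.
\item The induction hypothesis, together with the lower bound from Lemma~\ref{lem4.2} and Theorem~\ref{thm4.1} applied at state $0^{j_1,\ldots,j_k,j}$, gives $\bar\varphi^{(k+1),j}(t,i)\ge a_{k+1,j}>0$. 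Also $\lambda_j^{(k)}(i)>0$ by assumption.
\end{itemize}
Combining these,
\[
(1-\pi_j^*)^{1-\gamma}\le \frac{C_0\sum_\ell\lambda_\ell^{(k)}(i)+KC_0'}{a_{k+1,j}\lambda_j^{(k)}(i)}=:L.
\]
Because $1-\gamma<0$, this yields $1-\pi_j^*\ge L^{-1/(\gamma-1)}>0$. Taking the minimum over states, regimes and coordinates, together with the value $1$ from the defaulted coordinates, defines $C_-$.

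The one point needing care is the uniform positive lower bound on $\bar\varphi^{(k+1),j}$. I would extract it explicitly from the proofs above: in Lemma~\ref{lem4.2} the solution is bounded below by $\epsilon^{(n)}$, and in Theorem~\ref{thm4.1} by the constant $a$ of \eqref{a}. These are recorded by backward induction over the default states.
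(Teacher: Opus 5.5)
Your proposal is correct and follows essentially the same route as the paper. The norm bound comes from the compact-ball restriction of Lemma~\ref{lem4.4}, as recorded in Lemma~\ref{Prop.4.1}. The bound away from the barrier $\pi_j=1$ comes from comparing the minimal value of the objective with its value at $\pi^{(k)}=0$, using that $\bar H^{(k)}(\pi^{*},i)$ is bounded below on that ball and that the neighbouring solutions and the intensities are bounded away from zero. The only difference is bookkeeping: you work state by state and take the worst of finitely many constants (including the value $1$ for defaulted coordinates), whereas the paper writes single global constants $C_{\max}$, $C_{\min}$, $\lambda_{\min}$, $\lambda_{\max}$, $\delta_{\min}$, the last obtained via eigenvalue interlacing.
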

\begin{proof}
For the case with $k=n$, the conclusion is trivial since $\pi_1^*=\cdots=\pi^*_n=0$. For the case where $0\leq k\leq n-1$, using similar arguments as those in the proof of Lemma \ref{lem4.4}, one can check that, for all $(t,i,h)\in[0,T]\times\mathcal S_I\times\mathcal S_H$,
\begin{eqnarray}\label{C3}
\|\pi^{(k),*}(t,i,h)\|
\hspace{-0.3cm}&\leq&\hspace{-0.3cm}
\frac{{C_{\mu}}}{\gamma\delta_{\min}}+\sqrt{\left(\frac{{C_{\mu}}}{\gamma\delta_{\min}}\right)^2-\frac{2n\lambda_{\max}C_{\max}}{\gamma(1-\gamma)\delta_{\min} C_{\min}}}
:=C_{+}\in(0,\infty),
\end{eqnarray}
where $C_{\max}$, $C_{\min}$, {$C_{\mu}$} and $\lambda_{\max}$ are finite positive constants defined by
$$C_{\max}:=\sup\limits_{(t,i,h)\in[0,T]\times\mathcal S_I\times\mathcal S_H}|\varphi(t,i,h)|,\quad C_{\min}:=\inf\limits_{(t,i,h)\in[0,T]\times\mathcal S_I\times\mathcal S_H}|\varphi(t,i,h)|$$
$${C_{\mu}:=\max\limits_{i\in\mathcal S_I}\|\mu(i)\|},\quad \lambda_{\max}:=\max_{j\in\{1,...,n\}}\max_{(i,h)\in\mathcal S_I\times\mathcal S_H}\lambda_j(i,h),$$
and $\delta_{\min}:=\min_{i\in\mathcal S_I}\delta_{i}\in(0,\infty)$ with $\delta_{i}$ being the smallest eigenvalue of the positive definite matrix $\Sigma(i)\Sigma(i)^{\top}$. Here, it would be helpful to recall that, by the interlacing property of Hermitian matrix eigenvalues, each eigenvalue of $\Sigma^{(k)}(i)\Sigma^{(k)}(i)^{\top}$ for arbitrary $0\leq k\leq n-1$ and $\{j_{1},...,j_{k}\}\subset \{1,...,n\}$ is larger than $\delta_{i}$, and hence, is larger than $\delta_{\min}$.

In addition, for $0\leq k\leq n-1$, for the optimal investment feedback vector $\pi^{(k),*}$ obtained in Lemma \ref{Prop.4.1}, we have
\begin{eqnarray}
\hspace{-0.3cm}&&\hspace{-0.3cm}
\sum_{\ell\notin\{j_1,...,j_k\}}(1-\pi^{(k),*}_{\ell})^{1-\gamma}{\varphi}(T-t,i,\widetilde{h}^{\ell})\lambda^{(k)}_{\ell}(i)
+
\Bar{H}^{(k)}(\pi^{(k),*},i)\varphi(T-t,i,h)
\nonumber\\
\hspace{-0.3cm}&=&\hspace{-0.3cm}
\inf_{\substack{\pi^{(k)}\in\mathcal U^{n-k}; \\ {\|\pi^{(k)}\|}\leq C_3(a)}}\left\{
\sum_{\ell\notin\{j_1,...,j_k\}}(1-\pi^{(k)}_{\ell})^{1-\gamma}{\varphi}(T-t,i,\widetilde{h}^{\ell})\lambda^{(k)}_{\ell}(i)+\Bar{H}^{(k)}(\pi^{(k)},i)\varphi(T-t,i,h)\right\}
\nonumber\\
\hspace{-0.3cm}&\leq&\hspace{-0.3cm}
\sum_{\ell\notin\{j_1,...,j_k\}}{\varphi}(T-t,i,\widetilde{h}^{\ell})\lambda^{(k)}_{\ell}(i)+\Bar{H}^{(k)}(0,i)\varphi(T-t,i,h),
\nonumber
\end{eqnarray}
This yields that
\begin{eqnarray}
\hspace{-0.3cm}&&\hspace{-0.3cm}
\sum_{\ell\notin\{j_1,...,j_k\}}(1-\pi^{(k),*}_{\ell})^{1-\gamma}{\varphi}(T-t,i,\widetilde{h}^{\ell})\lambda^{(k)}_{\ell}(i)
\nonumber\\
\hspace{-0.3cm}&\leq&\hspace{-0.3cm}
\sum_{\ell\notin\{j_1,...,j_k\}}{\varphi}(T-t,i,\widetilde{h}^{\ell})\lambda^{(k)}_{\ell}(i)+\varphi(T-t,i,h)\big[\Bar{H}^{(k)}(0,i)-\Bar{H}^{(k)}(\pi^{(k),*},i)\big]
\nonumber\\
\hspace{-0.3cm}&=&\hspace{-0.3cm}
\sum_{\ell\notin\{j_1,...,j_k\}}{\varphi}(T-t,i,\widetilde{h}^{\ell})\lambda^{(k)}_{\ell}(i)-\varphi(T-t,i,h)(1-\gamma)
\left[{\pi^{(k),*}}^{\top}\widetilde\mu^{(k)}(i)-\frac{\gamma}{2}\Big|{\Sigma^{(k)}(i)}^{\top}{\pi^{(k),*}}\Big|^2\right].\nonumber
\end{eqnarray}
It follows from \eqref{C3} that
\begin{eqnarray}
\label{4.53}
\sum_{\ell\notin\{j_1,...,j_k\}}(1-\pi^{(k),*}_{\ell})^{1-\gamma}C_{\min}\lambda_{\min}
\hspace{-0.3cm}&\leq&\hspace{-0.3cm}
\sum_{\ell\notin\{j_1,...,j_k\}}(1-\pi^{(k),*}_{\ell})^{1-\gamma}{\varphi}(T-t,i,\widetilde{h}^{\ell})\lambda^{(k)}_{\ell}(i)
\nonumber\\
\hspace{-0.3cm}&\leq&\hspace{-0.3cm}
n\lambda_{\max}C_{\max}+C_{\max}|1-\gamma|
\left[C_+{\color{black}C_{\mu}}+\frac{\gamma}{2}\delta_{\max}^2C_+^2\right],
\end{eqnarray}
where $\lambda_{\min}:=\min_{j\in\{1,...,n\}}\min_{(i,h)\in\mathcal S_I\times\mathcal S_H}\lambda_j(i,h)$ and $\delta_{\max}:=\max_{i\in\mathcal S_I}\|{\Sigma(i)}\|$ are finite positive constants. In light of \eqref{4.53}, one has that
\begin{eqnarray}\label{C.hat}
\max_{\ell\notin\{j_1,...,j_k\}}(1-\pi^{(k),*}_{\ell})^{1-\gamma}
\hspace{-0.3cm}&\leq&\hspace{-0.3cm}
(C_{\min}\lambda_{\min})^{-1}\Big[n\lambda_{\max}C_{\max}+C_{\max}|1-\gamma|\left[C_+{\color{black}C_{\mu}}+\frac{\gamma}{2}{\delta}_{\max}^2C_+^2\right]\Big]
\nonumber\\
\hspace{-0.3cm}&:=&\hspace{-0.3cm} C_{-}^{1-\gamma},
\end{eqnarray}
which together with $1-\gamma<0$ yields
\begin{eqnarray}
\min_{\ell\notin\{j_1,...,j_k\}}(1-\pi^{(k),*}_{\ell})\geq C_{-},\nonumber
\end{eqnarray}
Here, the positive constant $C_{-}$ is independent of $(t,i,h)\in[0,T]\times\mathcal S_I\times\mathcal S_H$. Thus, the proof of the lemma is complete.
\end{proof}

We conclude the analysis by verifying that the classical solution constructed above is indeed the value function of the original control problem.
\begin{thm}[Verification Result]\label{verify.thm}
At any default state $h=0^{j_1,...,j_k}$ for $k=0,1,...,n$, let $\varphi(t,z)=(\varphi(t,i,h);i\in\mathcal S_I)^{\top}$ with $(t,h)\in[0,T]\times\mathcal S_H$ be the unique solution of the recursive HJB system \eqref{HJB}. Let $({\pi}^{*},{c}^*)$ be the unique optimal portfolio-consumption strategy obtained in Lemma \ref{Prop.4.1}. Then, we have
\begin{itemize}
\item[{\rm (\romannumeral1)}] For $(t,x,i,h)\in[0,T]\times\mathbb{R}_+\times\mathcal S_I\times\mathcal S_H,$ and any admissible strategy $(\pi,c)\in\mathcal A$, it holds that
\begin{eqnarray}
\frac{x^{1-\gamma}}{1-\gamma}\varphi(t,i,h)\geq\mathbb{E}_{t,x,i,h}\bigg[\int_t^TU_\eta(c_sX^{\pi,c}_s,X^{\pi,c}_s)ds+u(X^{\pi,c}_T)\bigg].\nonumber
\end{eqnarray}
\item[{\rm(\romannumeral2)}]
The portfolio-consumption strategy $(\pi^*,c^*)$ obtained in Lemma \ref{Prop.4.1} is admissible. Moreover, the value function $V(t,x,i,h)$ for $(t,x,i,h)\in[0,T]\times\mathbb{R}_+\times\mathcal S_I\times\mathcal S_H$ admits the following representation
\begin{eqnarray}
V(t,x,i,h)=\mathbb{E}_{t,x,i,h}\bigg[\int_t^TU_\eta(c^*_sX^{\pi^*,c^*}_s,X^{\pi^*,c^*}_s)ds+u(X^{\pi^*,c^*}_T)\bigg]=\frac{x^{1-\gamma}}{1-\gamma}\varphi(t,i,h).\nonumber
\end{eqnarray}
\end{itemize}
\end{thm}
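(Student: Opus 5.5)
The plan is the standard verification argument applied to $w(t,x,i,h):=\frac{x^{1-\gamma}}{1-\gamma}\varphi(t,i,h)$, which is $C^{1,2}$ in $(t,x)$ on $[0,T]\times\mathbb R_+$ because $\varphi(\cdot,i,h)$ is a classical solution of \eqref{HJB} for every $(i,h)$. The result of the backward recursion (Lemma~\ref{lem4.2} and Theorem~\ref{thm4.1}) is used throughout. We also use the bounds $0<C_{\min}\le\varphi\le C_{\max}$ from the proof of Lemma~\ref{lem4.5}.

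\emph{Step (i).} Fix $(\pi,c)\in\mathcal A$ and write $X=X^{\pi,c}$. Apply It\^o's formula for semimartingales with jumps to $w(s,X_s,I_s,H_s)$ on $[t,\tau_N]$. Here $\tau_N$ is a localizing sequence: the first time $X$ leaves $[1/N,N]$, or $|\pi_s|$ or $c_s$ exceeds $N$, capped at $T$. Between jumps, the dynamics \eqref{X.sde} generate the drift and diffusion terms. A regime switch $i\to\ell$ gives the term $(w(\cdot,\ell,h)-w(\cdot,i,h))$, compensated by $\widetilde K^{i,\ell}$. A default of name $j$ multiplies wealth by $(1-\pi^j)$ and moves $h\to\widetilde h^j$, compensated by $M^j$.

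Collecting the $ds$-terms and adding $U_\eta(c_sX_s,X_s)$ gives $\frac{X_s^{1-\gamma}}{1-\gamma}$ times
\[
\partial_t\varphi+(1-\gamma)r\varphi+\sum_\ell q_{i\ell}\varphi(\ell)+H_\eta(\pi_s,c_s;\dots),
\]
where the consumption part of $U_\eta$ contributes $c^{\eta(1-\gamma)}$. Since $H_\eta\ge\inf H_\eta$ and \eqref{HJB} holds, the bracket is $\ge0$. Because $1/(1-\gamma)<0$, the whole drift is $\le0$.

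The stochastic integrals stopped at $\tau_N$ are true martingales, since the integrands are bounded there. Taking expectations therefore gives
\[
w(t,x,i,h)\ge\mathbb E\Big[\int_t^{\tau_N}U_\eta\,ds+w(\tau_N,X_{\tau_N},I_{\tau_N},H_{\tau_N})\Big].
\]
To let $N\to\infty$, treat the two terms separately. The integral term is monotone because $U_\eta\le0$, so monotone convergence applies. For the terminal term, use $|w|\le C_{\max}X^{1-\gamma}/(\gamma-1)$ together with the uniform integrability of $\{X_s^{1-\gamma}\}$ required by admissibility; this gives convergence to $\mathbb E[u(X_T)]$, since $\varphi(T)=1$. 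Hence (i) holds.

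\emph{Step (ii).} For $(\pi^*,c^*)$, first establish admissibility. By Lemma~\ref{lem4.5}, $\pi^*$ is bounded and $1-\pi^*_j\ge C_->0$. By \eqref{opti.strategy.cons} and $C_{\min}\le\varphi\le C_{\max}$, $c^*$ is bounded above and away from zero. Since the feedback depends only on $(t,I_{t-},H_{t-})$, the SDE \eqref{X.sde} is linear, and its solution is the explicit stochastic exponential
\[
X_s=x\exp\Big(\int(\cdot)\,du+\int\pi^\top\Sigma\,d(W,\widehat W)\Big)\prod_{\text{defaults}}(1-\pi^*_j).
\]
This solution is unique and positive.

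For uniform integrability, bound $X_s^{1-\gamma}$. Raise the stochastic exponential to the power $p(1-\gamma)$ with some $p>1$, and split it into a Dol\'eans exponential times a bounded factor. The jump factors satisfy $(1-\pi^*_j)^{p(1-\gamma)}\le C_-^{p(1-\gamma)}$, and there are at most $n$ defaults. This yields $\sup_s\mathbb E[X_s^{p(1-\gamma)}]<\infty$, which is de la Vall\'ee-Poussin uniform integrability.

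With the infimum attained at $(\pi^*,c^*)$, the drift in Step (i) vanishes identically, so the inequality becomes an equality. The same limit passage (or directly martingale integrands in $L^2$, since all coefficients are bounded and $X^{1-\gamma}$ has bounded moments) then gives $J(\pi^*,c^*)=w$. Combined with (i), this yields $V=w$.

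The main obstacle is the limit passage and martingale property. In (i), a general admissible $\pi$ is unbounded and $X^{1-\gamma}$ blows up as $X\to0$, which is why the localization together with uniform integrability is essential. In (ii), the moment estimate for $X^{1-\gamma}$ depends crucially on the lower bound $C_-$ from Lemma~\ref{lem4.5}.
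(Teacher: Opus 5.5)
Your proposal is correct and follows essentially the same route as the paper. In (i) you apply It\^o's formula to $\frac{x^{1-\gamma}}{1-\gamma}\varphi(t,i,h)$ up to localizing times, use the HJB inequality to make the drift nonpositive, and pass to the limit using the uniform integrability of $(X^{\pi,c})^{1-\gamma}$ from $\mathcal A$. In (ii) you use the bounds of Lemma~\ref{lem4.5}, the explicit exponential form of the optimal wealth with a Dol\'eans-exponential estimate for admissibility, and then equality in the It\^o identity. The remaining differences are only technical variations: you use monotone rather than dominated convergence for the running-utility term, and a uniform $L^p$ bound with de la Vall\'ee-Poussin in place of the paper's domination of $(X^{\pi^*,c^*})^{1-\gamma}$ by a constant times a uniformly integrable martingale.
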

\begin{proof}
For any admissible strategy $(\pi,c)\in\mathcal A$, we define a sequence of localization stopping times by, for all $n\in\mathbb N$,
$$\tau_n^t:=\inf\left\{s>t;X^{\pi,c}_s\in[0,1/n)\cup(n,\infty)\text{ or }|\pi_s|>n\right\},\qquad \forall t\in[0,T]$$
with $\inf\emptyset=+\infty$ by convention. Then, it is not difficult to verify that $\lim_{n\rightarrow\infty}\tau_n=\infty$. Thanks to Theorem \ref{thm4.1}, we can apply It\^o's formula to conclude that
\begin{eqnarray}\label{verify.equa}
\hspace{-0.3cm}&&\hspace{-0.3cm}
\frac{1}{1-\gamma}\left(X^{\pi,c}_{T\wedge\tau_n}\right)^{1-\gamma}\varphi(T\wedge\tau_n,I_{T\wedge\tau_n},H_{T\wedge\tau_n})
\nonumber\\
\hspace{-0.3cm}&=&\hspace{-0.3cm}
\frac{x^{1-\gamma}}{1-\gamma}\varphi(t,i,h)
+\int_t^{T\wedge\tau_n}\frac{1}{1-\gamma}\left(X^{\pi,c}_{v-}\right)^{1-\gamma}\bigg[\frac{\partial \varphi(v,I_{v-},H_{v-})}{\partial v}\bigg]dv
\nonumber\\
\hspace{-0.3cm}&&\hspace{-0.3cm}
+\int_t^{T\wedge\tau_n}\left(X^{\pi,c}_{v-}\right)^{1-\gamma}\Big(r(I_{v-})+\pi^{\top}_v(I-\operatorname{diag}(H_{v-}))\widetilde\mu(I_{v-},H_{v-})-c_v
\nonumber\\
\hspace{-0.3cm}&&\hspace{-0.3cm}
-\frac{\gamma}{2}\Big|\Sigma(I_{v-})^{\top}(I-\operatorname{diag}(H_{v-}))\pi_v\Big|^2\Big)\varphi(v,I_{v-},H_{v-})dv
\nonumber\\
\hspace{-0.3cm}&&\hspace{-0.3cm}
+\int_t^{T\wedge\tau_n}\left(X^{\pi,c}_{v-}\right)^{1-\gamma}\varphi(v,I_{v-},H_{v-})\pi_v^{\top}(I-\operatorname{diag}(H_{v-}))\Sigma(I_{v-})d(W_v^{\top},\widehat W_v^{\top})^{\top}
\nonumber\\
\hspace{-0.3cm}&&\hspace{-0.3cm}
+\int_t^{T\wedge\tau_n}\sum_{j=1}^n\frac{1}{1-\gamma}\left(X^{\pi,c}_{v-}\right)^{1-\gamma}\Big[(1-\pi^j_v)^{1-\gamma}\varphi(v,I_{v-},\widetilde{H}^j_{v-})-\varphi(v,I_{v-},{H}_{v-})\Big]dH^j_{v}
\nonumber\\
\hspace{-0.3cm}&&\hspace{-0.3cm}
+\int_t^{T\wedge\tau_n}\sum_{\ell\neq I_{v-}}\frac{1}{1-\gamma}\left(X^{\pi,c}_{v-}\right)^{1-\gamma}\Big[\varphi(v,\ell,{H}_{v-})-\varphi(v,I_{v-},{H}_{v-})\Big]dK^{I_{v-},\ell}_v
\nonumber\\
\hspace{-0.3cm}&=&\hspace{-0.3cm}
\frac{x^{1-\gamma}}{1-\gamma}\varphi(t,i,h)
+\int_t^{T\wedge\tau_n}\frac{1}{1-\gamma}\left(X^{\pi,c}_{v-}\right)^{1-\gamma}\mathcal{L}(\pi,c;v,I_{v-},H_{v-})dv+N^{\pi,c}_{T\wedge\tau_n}-N^{\pi,c}_t,
\end{eqnarray}
Here, for $(\pi,c)\in\mathbb{R}^{n}\times(0,\infty)$ and $(t,i,h)\in[0,T]\times\mathcal S_I\times\mathcal S_H$, the coefficient is defined by
\begin{eqnarray}
\mathcal{L}(\pi,c;t,i,h)
\hspace{-0.3cm}&:=&\hspace{-0.3cm}
\frac{\partial\varphi(t,i,h)}{\partial t}+(1-\gamma)r(i)\varphi(t,i,h)
+\sum_{\ell\in\mathcal S_I}\varphi(t,\ell,h)q_{i\ell}
\nonumber\\
\hspace{-0.3cm}&&\hspace{-2.5cm}
+\Big[(1-\gamma)\Big({\pi}^{\top}(I-\operatorname{diag}(h))\widetilde\mu(i,h)-c\Big)-\frac{\gamma(1-\gamma)}{2}|{\Sigma(i)}^{\top}(I-\operatorname{diag}(h))\pi|^2\Big]\varphi(t,i,h)
\nonumber\\
\hspace{-0.3cm}&&\hspace{-2.5cm}
+\sum_{j=1}^n\Big[(1-\pi_j)^{1-\gamma}\varphi(t,i,\widetilde{h}^j)-\varphi(t,i,h)\Big]\lambda_j(i,h)(1-h^j),\nonumber
\end{eqnarray}
and the $\mathbb{P}$-(local) martingale is defined as
\begin{eqnarray}
N^{\pi,c}_t
\hspace{-0.3cm}&:=&\hspace{-0.3cm}
\int_0^{t}\left(X^{\pi,c}_{v-}\right)^{1-\gamma}\varphi(v,I_{v-},H_{v-})\pi_v^{\top}\Sigma(I_{v-})d(W_v^{\top},\widehat W_v^{\top})^{\top}
\nonumber\\
\hspace{-0.3cm}&&\hspace{-0.3cm}
+\int_0^{t}\sum_{j=1}^n\frac{1}{1-\gamma}\left(X^{\pi,c}_{v-}\right)^{1-\gamma}\Big[(1-\pi_v^j)^{1-\gamma}\varphi(v,I_{v-},\widetilde{H}^j_{v-})-\varphi(v,I_{v-},{H}_{v-})\Big]dM^j_{v}
\nonumber\\
\hspace{-0.3cm}&&\hspace{-0.3cm}
+\int_0^{t}\sum_{\ell\neq I_{v-}}\frac{1}{1-\gamma}\left(X^{\pi,c}_{v-}\right)^{1-\gamma}\Big[\varphi(v,\ell,{H}_{v-})-\varphi(v,I_{v-},{H}_{v-})\Big]d\widetilde{K}^{I_{v-},\ell}_v,\nonumber
\end{eqnarray}
with
$\widetilde{K}^{i,j}_t=K^{i,j}_t-\int_0^tq_{ij}\mathbf{1}_{\{I_s=i\}}ds,\, \text{for all }i,j\in\mathcal S_I\text{ and }i\neq j.$
By the definition of $\tau_n$, it holds that $N^{\pi,c}=(N^{\pi,c}_{t\wedge \tau_n})_{t\in[0,T]}$ is a $\mathbb{P}$-martingale. Since $\varphi(t,i,h)$ for $(t,i,h)\in[0,T]\times\mathcal S_I\times\mathcal S_H$ is the unique classical solution to the HJB system \eqref{HJB}, and $(\pi^*,c^*)$ is the corresponding optimal feedback control functions, we have, for all $(\pi,c)\in\mathcal A$,
\begin{eqnarray}
\label{4.34}
\mathcal{L}(\pi,c;t,i,h)+(c)^{\eta(1-\gamma)}\geq\mathcal{L}(\pi^*,c^*;t,i,h)+(c^*)^{\eta(1-\gamma)}=0.
\end{eqnarray}
Taking conditional expectation on both sides of \eqref{verify.equa} and using \eqref{4.34} yields that
\begin{eqnarray}\label{verify.condi2}
\hspace{-0.3cm}&&\hspace{-0.3cm}
\mathbb{E}_{t,x,i,h}\bigg[\frac{1}{1-\gamma}\left(X^{\pi,c}_{T\wedge\tau_n}\right)^{1-\gamma}\varphi(T\wedge\tau_n,I_{T\wedge\tau_n},H_{T\wedge\tau_n})\bigg]
\nonumber\\
\hspace{-0.3cm}&=&\hspace{-0.3cm}
\frac{x^{1-\gamma}}{1-\gamma}\varphi(t,i,h)
+\mathbb{E}_{t,x,i,h}\bigg[\int_t^{T\wedge\tau_n}\frac{1}{1-\gamma}\left(X^{\pi,c}_{s}\right)^{1-\gamma}\mathcal{L}(\pi,c;s,I_s,H_s)ds+N^{\pi,c}_{T\wedge\tau_n}-N^{\pi,c}_t\bigg]
\nonumber\\
\hspace{-0.3cm}&\leq&\hspace{-0.3cm}
\frac{x^{1-\gamma}}{1-\gamma}\varphi(t,i,h)
-\mathbb{E}_{t,x,i,h}\bigg[\int_t^{T\wedge\tau_n}\frac{1}{1-\gamma}\left(X^{\pi,c}_{s}\right)^{1-\gamma}(c_s)^{\eta(1-\gamma)}ds\bigg]
\nonumber\\
\hspace{-0.3cm}&\leq&\hspace{-0.3cm}
\frac{x^{1-\gamma}}{1-\gamma}\varphi(t,i,h)
-\mathbb{E}_{t,x,i,h}\bigg[\int_t^{T\wedge\tau_n}U_\eta(c_sX^{\pi,c}_{s},X^{\pi,c}_{s})ds\bigg].
\end{eqnarray}
It is easy to rewrite \eqref{verify.condi2} as follows:
\begin{eqnarray}
\label{4.36}
\hspace{-0.3cm}&&\hspace{-0.3cm}
\mathbb{E}_{t,x,i,h}\bigg[
\frac{1}{1-\gamma}\left(X^{\pi,c}_{T\wedge\tau_n}\right)^{1-\gamma}\varphi(T\wedge\tau_n,I_{T\wedge\tau_n},H_{T\wedge\tau_n})
\nonumber\\
\hspace{-0.3cm}&&\hspace{4.5cm}
+\int_t^{T\wedge\tau_n}U_\eta(c_sX^{\pi,c}_{s},X^{\pi,c}_{s})ds\bigg]
\leq
\frac{x^{1-\gamma}}{1-\gamma}
\varphi(t,i,h).
\end{eqnarray}
Since $(\pi,c)\in\mathcal A$ (in particular, $((X^{\pi,c}_{t})^{1-\gamma})_{t\in[0,T]}$ are uniformly integrable), and $t\mapsto\varphi(t,i,h)$ is continuous and bounded over $[0,T]$, one realizes that the family of random variables
\begin{eqnarray}
\bigg(\frac{1}{1-\gamma}\left(X^{\pi,c}_{T\wedge\tau_n}\right)^{1-\gamma}\varphi(T\wedge\tau_n,I_{T\wedge\tau_n},H_{T\wedge\tau_n})\bigg)_
{n\geq1},\nonumber
\end{eqnarray}
is uniformly integrable. This, together with, the fact that
$$\left(X^{\pi,c}_{T\wedge\tau_n}\right)^{1-\gamma}\varphi(T\wedge\tau_n,I_{T\wedge\tau_n},H_{T\wedge\tau_n})
\stackrel{\text{a.s.}}\longrightarrow
(X^{\pi,c}_T)^{1-\gamma}\varphi(T,I_T,H_T),\quad \text{as }\,\, n\rightarrow\infty,$$
yields that, as $n\rightarrow\infty$,
\begin{eqnarray}\label{L1}
\frac{1}{1-\gamma}\left(X^{\pi,c}_{T\wedge\tau_n}\right)^{1-\gamma}\varphi(T\wedge\tau_n,I_{T\wedge\tau_n},H_{T\wedge\tau_n})\stackrel{L^1}\longrightarrow \frac{1}{1-\gamma}(X^{\pi,c}_T)^{1-\gamma}\varphi(T,I_T,H_T),
\end{eqnarray}
It follows from \eqref{L1} that, as $n\rightarrow\infty$,
\begin{eqnarray}\label{L1.1}
\hspace{-0.3cm}&&\hspace{-0.3cm}
\mathbb{E}_{t,x,i,h}\bigg[\frac{1}{1-\gamma}\left(X^{\pi,c}_{T\wedge\tau_n}\right)^{1-\gamma}\varphi(T\wedge\tau_n,I_{T\wedge\tau_n},H_{T\wedge\tau_n})\bigg]
\nonumber\\
\hspace{-0.3cm}&\rightarrow&\hspace{-0.3cm}
\mathbb{E}_{t,x,i,h}\bigg[\frac{1}{1-\gamma}(X^{\pi,c}_T)^{1-\gamma}\varphi(T,I_T,H_T)\bigg], \quad\text{as }\,\, n\rightarrow\infty.
\end{eqnarray}
On the other hand, using the DCT, it holds that
\begin{eqnarray}\label{lim.u}
\lim_{n\rightarrow\infty}\mathbb{E}_{t,x,i,h}\bigg[\int_t^{T\wedge \tau_n}U_\eta(c_sX^{\pi,c}_{s},X^{\pi,c}_{s})ds\bigg]=\mathbb{E}_{t,x,i,h}\bigg[\int_t^{T}U_\eta(c_sX^{\pi,c}_s,X^{\pi,c}_s)ds\bigg].
\end{eqnarray}
Then, in lieu of \eqref{4.36}, \eqref{L1.1} and \eqref{lim.u}, we have
\begin{eqnarray}
\hspace{-0.3cm}&&\hspace{-0.3cm}
\mathbb{E}_{t,x,i,h}\bigg[
u\left(X^{\pi,c}_{T}\right)+\int_t^{T}U_\eta(c_sX^{\pi,c}_{s},X^{\pi,c}_{s})ds\bigg]
\nonumber\\
\hspace{-0.3cm}&=&\hspace{-0.3cm}
\mathbb{E}_{t,x,i,h}\bigg[
\frac{1}{1-\gamma}\left(X^{\pi,c}_{T}\right)^{1-\gamma}\varphi(T,I_{T},H_{T})+\int_t^{T}U_\eta(c_sX^{\pi,c}_{s},X^{\pi,c}_{s})ds\bigg]
\nonumber\\
\hspace{-0.3cm}&=&\hspace{-0.3cm}
\lim_{n\rightarrow\infty}
\mathbb{E}_{t,x,i,h}\bigg[
\frac{1}{1-\gamma}\left(X^{\pi,c}_{T\wedge\tau_n}\right)^{1-\gamma}\varphi(T\wedge\tau_n,I_{T\wedge\tau_n},H_{T\wedge\tau_n})
\nonumber\\
\hspace{-0.3cm}&&\hspace{1cm}\quad \quad \,\,
+\int_t^{T\wedge\tau_n}U_\eta(c_sX^{\pi,c}_{s},X^{\pi,c}_{s})ds\bigg]
\nonumber\\
\hspace{-0.3cm}&\leq&\hspace{-0.3cm}
\frac{x^{1-\gamma}}{1-\gamma}
\varphi(t,i,h),\nonumber
\end{eqnarray}
Above, in the first equality, one has used the terminal condition $\varphi(T,i,h)=1$ for all $(i,h)\in\mathcal S_I\times\mathcal S_H$. The validity of conclusion (\romannumeral1) is verified.

We next turn to the proof of conclusion (\romannumeral2). We first prove that the portfolio-consumption strategy $(\pi^*,c^*)$ characterized in Lemma \ref{Prop.4.1} is admissible. In fact, under the strategy $(\pi^*,c^*)$, the investor's wealth process is given by, for all $t\in[0,T]$,
\begin{eqnarray}
\label{4.40}
\hspace{-0.3cm}&&\hspace{-1.5cm}|X^{\pi^*,c^*}_t|^{1-\gamma}
=
x^{1-\gamma}\exp\Bigg((1-\gamma)\int_0^t\big[{r(I_s)}+\pi_s^{*\top}
(\mu(I_s)-r(I_s)e_n)
-c^*_s\big]ds
\nonumber\\
\hspace{-0.3cm}&&\hspace{-0.3cm}
+(1-\gamma)\int_0^t\pi_s^{*\top}\Sigma(I_s)d(W_s^{\top},\widehat W_s^{\top})^{\top}
-(1-\gamma)\int_0^t\frac{1}{2}|\Sigma(I_s)^{\top}\pi^*_s|^2ds
\nonumber\\
\hspace{-0.3cm}&&\hspace{-0.3cm}
+\sum_{j=1}^{n}(1-\gamma)\int_0^{t}\log(1-\pi^{j*}_s)dH_s^j+\sum_{j=1}^{n}(1-\gamma)\int_0^{t\wedge\tau^j}\lambda_j(I_s,H_s)\pi^{j*}_sds
\Bigg)
\nonumber\\
\hspace{-0.3cm}&=&\hspace{-0.3cm}
x^{1-\gamma}\exp\Bigg((1-\gamma)\int_0^t\big[{r(I_s)+\pi_s^{*\top}}
(\mu(I_s)-r(I_s)e_n)
-\Big(\frac{{\varphi}(T-s,I_s,H_s)}{\eta}\Big)^{\frac{1}{\eta(1-\gamma)-1}}\big]ds
\nonumber\\
\hspace{-0.3cm}&&\hspace{-0.3cm}
+(1-\gamma)\int_0^t\pi_s^{*\top}\Sigma(I_s)d(W_s^{\top},\widehat W_s^{\top})^{\top}
{-(1-\gamma)\int_0^t\frac{1}{2}|\Sigma(I_s)^{\top}\pi^*_s|^2ds}
\nonumber\\
\hspace{-0.3cm}&&\hspace{-0.3cm}
+\sum_{j=1}^{n}(1-\gamma)\int_0^{t}\log(1-\pi^{j*}_s)dH_s^j+\sum_{j=1}^{n}(1-\gamma)\int_0^{t\wedge\tau^j}\lambda_j(I_s,H_s)\pi^{j*}_sds
\Bigg).
\nonumber\\
\hspace{-0.3cm}&=&\hspace{-0.3cm}
x^{1-\gamma}\exp\Bigg(\int_0^t(1-\gamma)\Big[{r(I_s)}+\pi^{*\top}_s(\mu(I_s)-r(I_s)e_n)
-\Big(\frac{{\varphi}(T-s,I_s,H_s)}{\eta}\Big)^{\frac{1}{\eta(1-\gamma)-1}}\big]ds
\nonumber\\
\hspace{-0.3cm}&&\hspace{-0.3cm}
+\sum_{j=1}^{n}(1-\gamma)\int_0^{t}\log(1-\pi^{j*}_s)dH_s^j+\sum_{j=1}^{n}(1-\gamma)\int_0^{t\wedge\tau^j}\lambda_j(I_s,H_s)\pi^{j*}_sds
\nonumber\\
\hspace{-0.3cm}&&\hspace{-0.3cm}
-\int_0^t\frac{\gamma(1-\gamma)}{2}|\Sigma(I_s)^{\top}\pi^*_s|^2ds\Bigg)
\nonumber\\
\hspace{-0.3cm}&&\hspace{-0.3cm}
\times\exp\Bigg((1-\gamma)\int_0^t\pi^{*\top}_s\Sigma(I_s)d(W_s^{\top},\widehat W_s^{\top})^{\top}
-\frac{(1-\gamma)^2}{2}\int_0^t|\Sigma(I_s)^{\top}\pi^*_s|^2ds
\Bigg).
\end{eqnarray}
It follows from Lemma \ref{lem4.5} that
$$\max_{1\leq j\leq n}|\log(1-{\pi}^{*}_j)|<\max\left\{\log C_{-},\log(1+C_+)\right\}:=\Bar{C},$$
with $C_+$ and $C_{-}$ being given by \eqref{C3} and \eqref{C.hat}, respectively.
Hence, we deduce from \eqref{4.40} that
\begin{eqnarray}
\label{4.69}
\hspace{14mm}|X^{\pi^*,c^*}_t|^{1-\gamma}\leq C_T x^{1-\gamma}\exp\Bigg((1-\gamma)\int_0^t{\color{black}\pi^{*\top}_s}\Sigma(I_s)d(W_s^{\top},\widehat W_s^{\top})^{\top}
-\frac{(1-\gamma)^2}{2}\int_0^t|\Sigma(I_s)^{\top}{\color{black}\pi^*_s}|^2ds
\Bigg).\hspace{-14mm}
\end{eqnarray}
Here, $C_{T}>0$ is a finite constant given as
\begin{eqnarray}\label{4.41}
C_{T}\hspace{-0.3cm}&:=&\hspace{-0.3cm}\exp\Bigg\{T\bigg[\left|1-\gamma\right|\Big[C_{r}+C_+C_{\bar{\mu}}
+\Big(\frac{C_{\min}}{\eta}\Big)^{\frac{1}{\eta(1-\gamma)-1}}\Big]
\nonumber\\
\hspace{-0.3cm}&&\hspace{0.7cm}
+nC_+|1-\gamma|\lambda_{\max}
+C_+^2\left|\frac{\gamma(1-\gamma)}{2}\right|\delta_{\max}^2\bigg]+n|1-\gamma|\Bar{C}\Bigg\},
\end{eqnarray}
with $C_r:=\max_{1\leq i\in\mathcal S_I}r(i)$.
In addition, it also holds that
\begin{eqnarray}\label{4.44}
\hspace{-0.3cm}&&\hspace{-0.3cm}
\mathbb{E}\bigg[\exp\bigg(\frac{(1-\gamma)^2}{2}\int_0^t|\Sigma(I_s)^{\top}{\color{black}\pi^*_s}|^2ds\bigg)\bigg]
\leq
\exp\bigg(TC_+^2\delta_{\max}^2\frac{(1-\gamma)^2}{2}\bigg).
\end{eqnarray}
Then, with the Novikov’s condition verified in \eqref{4.44}, we conclude that the process
\begin{eqnarray}
\hspace{-0.3cm}&&\hspace{-0.3cm}
\exp\Bigg((1-\gamma)\int_0^t{\color{black}\pi^{*\top}_s}\Sigma(I_s)d(W_s^{\top},\widehat W_s^{\top})^{\top}
-\frac{(1-\gamma)^2}{2}\int_0^t|\Sigma(I_s)^{\top}{\color{black}\pi^*_s}|^2ds
\Bigg),\quad t\in[0,T]\nonumber
\end{eqnarray}
is a $\mathbb{P}$-martingale, and hence is uniformly integrable. This, together with \eqref{4.69}, implies that, the family of random variables $((X_t^{\pi^*,c^*})^{1-\gamma})_{t\in[0,T]}$ is uniformly integrable. Actually, a family of random variables that are dominated by another uniformly integrable family of random variables are uniformly integrable themselves. Hence, the portfolio-consumption strategy $(\pi^*,c^*)$ is admissible.

To prove the identity in conclusion (\romannumeral2), it follows from \eqref{verify.condi2} that
\begin{eqnarray}\label{verify.condi4}
\hspace{-0.3cm}&&\hspace{-0.3cm}
\mathbb{E}_{t,x,i,h}\bigg[\frac{1}{1-\gamma}\left(X^{\pi^*,c^*}_{T\wedge\tau_n}\right)^{1-\gamma}\varphi(T\wedge\tau_n,I_{T\wedge\tau_n},H_{T\wedge\tau_n})\bigg]
\nonumber\\
\hspace{-0.3cm}&=&\hspace{-0.3cm}
\frac{x^{1-\gamma}}{1-\gamma}\varphi(t,i,h)
+\mathbb{E}_{t,x,i,h}\bigg[\int_t^{T\wedge\tau_n}\frac{1}{1-\gamma}\left(X^{\pi^*,c^*}_{s}\right)^{1-\gamma}\mathcal{L}(\pi^*,c^*;s,I_s,H_s)ds\bigg].
\end{eqnarray}
By virtue of \eqref{HJB} and \eqref{opti.strategy.cons}, we have, a.s.
\begin{eqnarray}\label{A+c=0}
\mathcal{L}(\pi^*,c^*;s,I_s,H_s)+(c_s^*)^{\eta(1-\gamma)}=0,\quad s\in[t,T].
\end{eqnarray}
Thus, plugging \eqref{A+c=0} into \eqref{verify.condi4}, and then using \eqref{L1}--\eqref{lim.u}, one derives that
\begin{eqnarray}
V(t,x,i,h)
\hspace{-0.3cm}&=&\hspace{-0.3cm}
\mathbb{E}_{t,x,i,h}\bigg[\int_t^TU_\eta(c^*_sX^{\pi^*,c^*}_s,X^{\pi^*,c^*}_s)ds+u(X^{\pi^*,c^*}_T)\bigg]
\nonumber\\
\hspace{-0.3cm}&=&\hspace{-0.3cm}
\lim_{n\rightarrow\infty}\mathbb{E}_{t,x,i,h}\bigg[\frac{1}{1-\gamma}\left(X^{\pi^*,c^*}_{T\wedge\tau_n}\right)^{1-\gamma}\varphi(T\wedge\tau_n,I_{T\wedge\tau_n},H_{T\wedge\tau_n})
\nonumber\\
\hspace{-0.3cm}&&\hspace{4.5cm}
+\int_t^{T\wedge\tau_n}U_\eta(c^*_sX^{\pi^*,c^*}_{s},X^{\pi^*,c^*}_{s})ds\bigg]
\nonumber\\
\hspace{-0.3cm}&=&\hspace{-0.3cm}
\frac{x^{1-\gamma}}{1-\gamma}\varphi(t,i,h)
.\nonumber
\end{eqnarray}
This completes the proof of conclusion (\romannumeral2).
\end{proof}

\end{document}